\title{Boltzmann weights and fusion procedure for the rational seven-vertex SOS model}
\date{}
\newtheoremstyle{break}
{\topsep}{\topsep}%
{}{}%
{\bfseries}{}%
{\newline}{}%
\theoremstyle{break}
\renewenvironment{proof}[1][\proofname]{%
	\par\pushQED{\qed}\normalfont%
	\topsep6\p@\@plus6\p@\relax
	\trivlist\item[\hskip\labelsep\bfseries#1]%
	\ignorespaces
}{%
	\popQED\endtrivlist\@endpefalse
}
\theoremstyle{break}
\newtheorem{prop}{Proposition}
\begin{document}
	
	
	
	
	
	
	\vspace*{1cm}
	
	\begin{center}
		{\Large \bf{Boltzmann weights and fusion procedure for the rational seven-vertex SOS model}}
	\end{center}
	
	\vspace{1.5 cm}
	
	\begin{center}
		P. V. Antonenko${}^{\dagger\bullet}$\footnote{ \sc e-mail: antonenko\_pavel@pdmi.ras.ru}, P. A. Valinevich${}^\dagger$\footnote{ \sc e-mail: valinevich@pdmi.ras.ru}
	\end{center}

	\vspace{0.8 cm}
	
	{\noindent \it ${}^\dagger$St. Petersburg Department of Steklov Mathematical Institute, St. Petersburg, Russia \\
	${}^\bullet$Leonhard Euler International Mathematical Institute in Saint Petersburg, St. Petersburg, Russia}
	
	\vspace{1.5 cm}
	
	{\bf Abstract}
	
	We consider seven-vertex two-dimensional integrable statistical model. With the help of intertwining vector method we construct its counterpart integrable model of SOS type. More general models of both types are constructed by means of fusion procedure. For SOS models we calculate the Boltzmann weights in terms of terminating hypergeometric series ${}_{9} F_8.$ Then using the similarity transformation for $R$-operators we construct a new family of vertex models containing the $11$-vertex model as the simplest representative. For this new set of models the vertex-SOS correspondence is constructed: we find the intertwining vectors, show that they do not depend on spectral parameter and the SOS statistical weights are similar to those obtained from the $7$-vertex model.

	\vspace{1.5 cm}

		
		
		


	\section{Introduction}
	
	Among the known integrable models solved with the help of quantum inverse scattering method there is a large class of statistical models defined on a two-dimensional lattice \cite{baxter}. In this work we deal with two classes of such models: vertex models \cite{bogolubov,slavnov,saleur} and models of SOS (solid-on-solid) type \cite{baxter,date88,pasquier}. For the first class of models the state of the lattice is defined by the state of its edges, and each lattice node (vertex) is assigned a Boltzmann weight; for the second -- the state parameters are attached to the vertices of the lattice, and Boltzmann weights are attributed to faces bounded by edges. For both these classes the condition of integrability has the form of a nonlinear equation for statistical weights -- the Yang-Baxter equation (which in addition depends on imposed boundary conditions \cite{skl1}).
	
	At the dawn of quantum inverse scattering method R.~Baxter \cite{baxter73} introduced a method which allowed to construct for some integrable vertex models the corresponding integrable SOS models. First formulated for the eight-vertex model this technique was later generalized in the works \cite{date86,date88,konno,vega}. The crucial role is played by intertwining vectors. The Boltzmann weights of SOS models obtained by this method are matrix elements of the $R$-operator in the basis of intertwining vectors.
	
	In the present paper we apply the technique of intertwining vectors to the rational seven-vertex model which is a special limit of the eight-vertex model \cite{zabrodin}. The key element of the construction is the realization of $R$-matrix (the matrix of Boltzmann weights) in terms of an operator in the space of homogeneous polynomials. For the model under consideration it was obtained in \cite{VA_19}. A similar approach was used in \cite{der_ch} for the six-vertex model, and in our case it allowed to considerably simplify the calculations.
	
	The result of the work is the expression for the Boltzmann weight $W^{(n,m)}$ of the SOS model for arbitrary $n, m$ -- the numbers of lattice edges' states. It has the form of the hypergeometric series ${}_{9} F_8.$ Representations of this type were obtained earlier for other models, see e.g. \cite{date88}.
	
	The structure of the paper is as follows. In section~\ref{models_section} we give the definitions of vertex and SOS models, define the intertwining vectors which are essential to construct the correspondence between them. Section~\ref{fusion_section} contains the necessary information about the fusion procedure for vertex and SOS models. It is needed for construction of the most general models with arbitrary number of faces or vertices states. In section~\ref{7V_model_section} we define the rational seven-vertex model, find the corresponding intertwining vectors and SOS model, discuss the connection with the SOS model from \cite{zapiski}. In section~\ref{polynomial_sect} the representation of the $R$-operator in terms of the differential operator on the space of polynomials \cite{der_ch} is derived, it allows to find the Boltzmann weights $W^{(n,m)}$ in the case of arbitrary $n,m\in\mathbb{N}$. It is shown how the fusion procedure works in this representation. The expression for Boltzmann weights $W^{(n,m)}$ in the form of terminating hypergeometric series ${}_{9}F_8$ is obtained in section~\ref{Wnm_section} with the help of realization of the $R$-operator and intertwining vectors in the space of polynomials. In section~\ref{11V_section} from the considered family of vertex models we construct a new family in which the simplest representative is the $11$-vertex model \cite{AZ_23,DKK_03,KS_97}, and also obtain the vertex-SOS transform for these models.
	
	\medskip
	\medskip
	\medskip
	
	\textbf{Acknowledgements}
	
	The study by P. V. Antonenko was conducted at Leonhard Euler International Mathematical Institute in Saint Petersburg with the financial support from The Ministry of Science and Higher Education of the Russian Federation (project no. 075-15-2022-289).

	\section{Vertex and SOS models} \label{models_section}

	\subsection{Vertex models}
	
	\begin{figure}[t]
		\begin{minipage}{0.45\textwidth}
			\centering\includegraphics[scale=0.5]{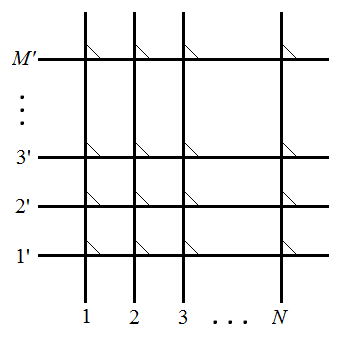}
			\caption{lattice of a vertex model.}
			\label{vertex_lattice}
		\end{minipage}
		\quad
		\begin{minipage}{0.45\textwidth}
			\centering\includegraphics[scale=0.5]{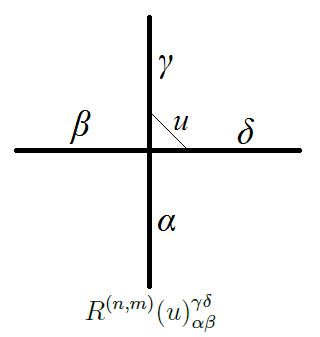}
			\caption{Boltzmann weight of a vertex.}
			\label{vertex_statistical_weight}
		\end{minipage}
	\end{figure}

	We consider vertex models defined on a two-dimensional square lattice, see fig.~\ref{vertex_lattice}, numbers of horizontal rows are written with prime. To its edges one assigns the state parameters. A set of state parameters of all edges is called a state of the lattice. In the models under our consideration the state parameters on horizontal lines take values from the set $\{-m+2j,\:j=0,1,\ldots,m\}$ ($m+1$ elements), and on vertical lines -- from the set $\{-n+2j,\:j=0,1,\ldots,n\}$ ($n+1$ elements). A vertex with parameters $\alpha,\beta,\gamma,\delta$ on adjacent edges (fig.~\ref{vertex_statistical_weight}) is assigned a Boltzmann weight $R^{(n,m)}(u)_{\alpha\beta}^{\gamma\delta}$, which is a function of the spectral parameter $u.$ The quantity $u$ reflects the dependence of vertex energies and Boltzmann weights on external parameters~\cite{baxter}. The energy of the lattice is the sum of energies of all vertices, then the statistical weight of a state is the product of Boltzmann weights of all vertices in the given state. The partition function of the lattice $Z^{(n,m)}_{NM}(u)$ is equal to the sum of statistical weights of all system states.
	
	The quantity $R^{(n,m)}(u)_{\alpha\beta}^{\gamma\delta}$ can be treated as a matrix element of the operator $R^{(n,m)}(u)$ in the space $V^n\otimes V^m$ called the $R$-operator \cite{baxter,bogolubov,slavnov}:
	\begin{equation} \nonumber
		R^{(n,m)}(u) \, e^\gamma\otimes h^\delta=\sum_{\alpha,\beta} R^{(n,m)}(u)_{\alpha\beta}^{\gamma\delta} \, e^\alpha\otimes h^\beta,
	\end{equation}
	where $V^n\cong\mathbb{C}^{n+1}$ and $V^m\cong\mathbb{C}^{m+1}$ are vector spaces over $\mathbb{C}$, $\{e^{\alpha^\prime},\:\alpha^\prime=-n,-n+2,\ldots, n\}$ is a basis in $V^n$, $\{h^{\beta^\prime},\:\beta^\prime=-m,-m+2,\ldots, m\}$ is a basis in $V^m$.
	
	To $i$-th vertical line in fig.~\ref{vertex_lattice} we associate a vector space $V^n_i\cong\mathbb{C}^{n+1}$ over $\mathbb{C}$, and to $j$-th horizontal line -- the space $V^m_{j^\prime}\cong\mathbb{C}^{m+1}$ \cite{slavnov}. To the intersection of $j$-th horizontal and $i$-th vertical line one associates the tensor product $V^n_i\otimes V^m_{j^\prime}$ and the $R$-operator $R^{(n,m)}_{ij^\prime}(u)\in\mathrm{End}\left(V^n_i\otimes V^m_{j^\prime}\right)$.
	
	We will focus on the case of periodic boundary conditions. In this instance the partition function can be calculated exactly if the Yang-Baxter equation holds \cite{bogolubov,slavnov}:
	\begin{equation} \label{YBE_vert}
		R^{(k,n)}_{12}(v)R^{(k,l)}_{13}(u)R^{(n,l)}_{23}(u-v)=R^{(n,l)}_{23}(u-v)R^{(k,l)}_{13}(u)R^{(k,n)}_{12}(v) \, .
	\end{equation}
	If an $R$-matrix satisfies the equation (\ref{YBE_vert}), then the corresponding vertex model is called exactly solvable \cite{zapiski}.
	
	With the help of the fusion procedure \cite{krs81,ks_qism,date86,date88,konno}, which will be discussed later, it is possible to construct the matrices $R^{(n,m)}(u)$ of vertex models with arbitrary finite numbers of edge state parameters' values from the matrix $R^{(1,1)}(u)$.

	\subsection{SOS models}
	
	SOS (solid-on-solid) models are defined on square lattice consisting of $N$ vertical rows of faces and $M$ horizontal rows (fig.~\ref{IRF_lattice}). The state parameters are attached to the cells' vertices. The state of the lattice is defined as a set of state parameters of all vertices. We considered the models with integer-valued state parameters. A face with the parameters of vertices equal to $a,b,c,d$ is assigned a Boltzmann weight (fig.~\ref{IRF_statistical_weight})
	\begin{equation} \nonumber
		W^{(n,m)}\left(\left.\begin{array}{cc}
			a & b\\ d & c
		\end{array}\right|u\right)
	\end{equation}
	which depends on the spectral parameter $u$. In our models the statistical weights satisfy the following condition:
	\begin{equation} \label{IRF_model_condition}
		W^{(n,m)}\left(\left.\begin{array}{cc}
			a & b\\ d & c
		\end{array}\right|u\right) \neq 0 \Leftrightarrow
		\begin{cases}
			a-b, d-c\in \{-n+2j,\:j=0,1,\ldots,n\} \\
			a-d, b-c\in \{-m+2j,\:j=0,1,\ldots,m\}
		\end{cases},
	\end{equation}
	i.e. the absolute value of the difference between the parameters of adjacent vertices on a horizontal line is less or equal than $n,$ and on a vertical line -- less or equal than $m.$
	
	\begin{figure}[t]
		\begin{minipage}{0.45\textwidth}
			\centering\includegraphics[scale=0.5]{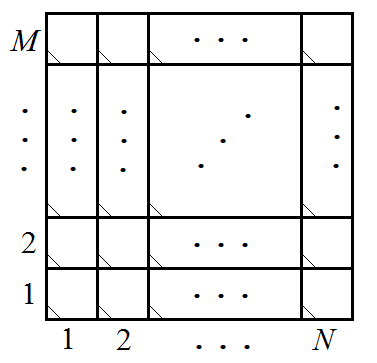}
			\caption{lattice of an SOS model.}
			\label{IRF_lattice}
		\end{minipage}
		\quad
		\begin{minipage}{0.45\textwidth}
			\centering\includegraphics[scale=0.4]{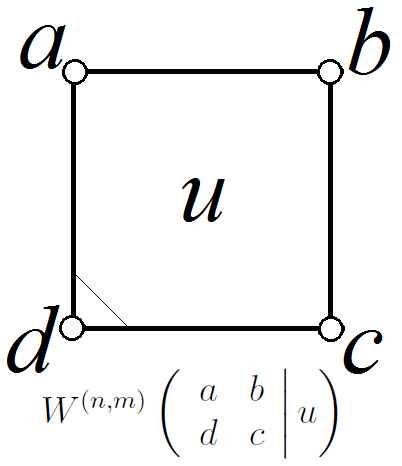}
			\caption{Boltzmann weight of a face.}
			\label{IRF_statistical_weight}
		\end{minipage}
	\end{figure}

	We imposed the periodic boundary conditions: on each vertical and horizontal line the state parameters of boundary vertices are equal. The model is exactly solvable if the SOS Yang-Baxter equation holds~\cite{baxter}
	\begin{align}
		\nonumber & \sum\limits_{g}
		W^{(k,n)}\left(\left.\begin{array}{cc}
			f & g\\
			e & d
		\end{array}\right|v-w\right)
		W^{(k,l)}\left(\left.\begin{array}{cc}
			a & b\\
			f & g
		\end{array}\right|u-w\right)
		W^{(n,l)}\left(\left.\begin{array}{cc}
			b & c\\
			g & d
		\end{array}\right|u-v\right) \\
		\label{YBE_IRF} & = \sum\limits_{g}
		W^{(n,l)}\left(\left.\begin{array}{cc}
			a & g\\
			f & e
		\end{array}\right|u-v\right)
		W^{(k,l)}\left(\left.\begin{array}{cc}
			g & c\\
			e & d
		\end{array}\right|u-w\right)
		W^{(k,n)}\left(\left.\begin{array}{cc}
			a & b\\
			g & c
		\end{array}\right|v-w\right).
	\end{align}
	Note that the sums are finite by virtue of the condition (\ref{IRF_model_condition}).

	\subsection{The correspondence between vertex and SOS models}
	
	The correspondence between $R^{(n,m)}$ and $W^{(n,m)}$ is given by the the expression \cite{konno}
	\begin{equation} \label{vertex_IRF}
		R^{(n,m)}(u-v) \, \psi^{(n)}(u)^a_b\otimes\psi^{(m)}(v)^b_c = \sum\limits_{b^\prime} \psi^{(n)}(u)^{b^\prime}_c\otimes\psi^{(m)}(v)^a_{b^\prime} \, W^{(n,m)}\left(\left.\begin{array}{cc}
			a & b\\
			b^\prime & c
		\end{array}\right|u-v\right),
	\end{equation}
	where $\psi^{(n)}(u)^a_b$ is a vector in $\mathbb{C}^{n+1}$, which depends on the spectral parameter $u$ and integer-valued state parameters of the SOS model $a,b$. It is called an intertwining vector. The equation \eqref{vertex_IRF} is called the vertex-SOS correspondence. By definition, $\psi^{(n)}(u)^a_b$ is non-zero if and only if $(a-b)\in\{-n+2j, \; j=0,1,2,\ldots,n\}$. This condition correlates with the restriction (\ref{IRF_model_condition}) on the SOS Boltzmann weights.
	
	The relation \eqref{vertex_IRF} was introduced by R.~Baxter for the eight-vertex model \cite{baxter73} (its vertex state parameters take $2$ values, i.e. $n=m=1$) and was generalized later to arbitrary $n$ and $m$ in \cite{date86,date88}.

	\section{The fusion procedure} \label{fusion_section}
	
	In this section we show how from the $R$-operator $R^{(1,1)}(u)$ describing the vertex model with two parameters of edges one can construct the $R$-operator $R^{(n,m)}(u)$ of the model with arbitrary number of edges states, it acts in $\mathbb{C}^{n+1}\otimes\mathbb{C}^{m+1}$. In order to construct this space from the spaces $\mathbb{C}^2$, in which $R^{(1,1)}(u)$ acts, one needs to consider the tensor product of $n+m$ copies of $\mathbb{C}^2$. Denote the first $n$ spaces $V_1, \ldots, V_n$, and the last $m$ spaces $V_{\bar{1}}, \ldots, V_{\bar{m}}$. Consider the symmetric components in the first $n$ and in the last $m$ copies, then the space $V^{(n,m)} = S^n\mathbb{C}^2 \otimes S^m\mathbb{C}^2$ is isomorphic to $\mathbb{C}^{n+1}\otimes\mathbb{C}^{m+1}$, because $S^k\mathbb{C}^2 \simeq \mathbb{C}^{k+1}$.
	
	Now define how $R^{(n,m)}(u)$ acts on the elements of $V^{(n,m)}$. Consider the operator
	\begin{equation} \label{R_k_1}
		R^{(n,1)}_{1\ldots n,\bar{j}}(u) = \Pi_{1\ldots n}R^{(1,1)}_{1\bar{j}}(u+n-1)\ldots R^{(1,1)}_{n-1\,\bar{j}}(u+1)R^{(1,1)}_{n\bar{j}}(u),
	\end{equation}
	where $\Pi_{1\ldots n}$ is the projector onto the space $S^n\mathbb{C}^2$, it acts on the basis elements in the following way:
	\begin{equation} \nonumber
		\Pi_{1\ldots n} \, e^{i_1} \otimes \ldots \otimes e^{i_n} = \frac{1}{n!}\sum\limits_{\sigma\in S_n} e^{i_{\sigma(1)}} \otimes \ldots \otimes e^{i_{\sigma(n)}},
	\end{equation}
	where $S_{n}$ is the set of permutations of the first $n$ natural numbers, and $i_1,\ldots,i_n$ take the values $+1$ and $-1$.
	The operator \eqref{R_k_1} acts nontrivially in the spaces $V_1, \ldots, V_n$ and $V_{\bar{j}}$. Now define $R^{(n,m)}(u)$:
	\begin{equation} \label{R_k_n}
		R^{(n,m)}(u) = \Pi_{\bar{1}\ldots\bar{m}}R^{(n,1)}_{1\ldots n,\bar{m}}(u)R^{(n,1)}_{1\ldots n,\overline{m-1}}(u-1)\ldots R^{(n,1)}_{1\ldots n,\bar{1}}(u-m+1).
	\end{equation}
	The operator $R^{(n,m)}(u)$ acts in the tersor product of all $n+m$ copies. But it can be seen from \eqref{R_k_1} and \eqref{R_k_n} that the result of its action belongs to $V^{(n,m)}$. The restriction of $R^{(n,m)}(u)$ to $V^{(n,m)}$ gives us the desired $R$-operator of the model with arbitrary number of edges' states.

	\medskip

	\begin{prop}[\cite{krs81, ks_qism, konno}] \label{YBE_prop}
		Let the operator $R^{(1,1)}(u)$ satisfy the Yang-Baxter equation
		\begin{equation} \nonumber
			R^{(1,1)}_{12}(u-v)R^{(1,1)}_{13}(u)R^{(1,1)}_{23}(v)=R^{(1,1)}_{23}(v)R^{(1,1)}_{13}(u)R^{(1,1)}_{12}(u-v) \, ,
		\end{equation}
		and let $R^{(1,1)}(-1) = c(I - P)$, where $c \neq 0$ is an arbitrary complex number, $I$ is the identity operator in $\mathbb{C}^2\otimes\mathbb{C}^2$, $P$ is the permutation operator: $P \, v\otimes w = w \otimes v$.
		
		Then any triple of operators $R^{(k,n)}(u)$, $R^{(k,l)}(u)$ and $R^{(n,l)}(u)$ constructed from $R^{(1,1)}(u)$ by means of the formula \eqref{R_k_n} satisfies the Yang-Baxter equation \eqref{YBE_vert}.
	\end{prop}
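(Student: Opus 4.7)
My plan is to establish \eqref{YBE_vert} for the fused operators by induction, building up from the given level-$(1,1)$ Yang--Baxter equation using only the $R^{(1,1)}$-YBE itself and the degeneration $R^{(1,1)}(-1)=c(I-P)$. The degeneration condition is equivalent to $R^{(1,1)}(-1)\Pi_{12}=\Pi_{12}R^{(1,1)}(-1)=0$ (since $I-P$ is, up to a factor, the antisymmetrizer on $\mathbb{C}^2\otimes\mathbb{C}^2$), and this annihilation is the sole mechanism by which the symmetrizers in \eqref{R_k_1}--\eqref{R_k_n} get absorbed throughout the computation.

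As a preliminary I would prove a \emph{transport lemma}: in the string \eqref{R_k_1} the projector $\Pi_{1\ldots n}$ may equivalently be placed on the right as well,
\[
	\Pi_{1\ldots n}\,R^{(1,1)}_{1\bar j}(u+n-1)\cdots R^{(1,1)}_{n\bar j}(u)
	=\Pi_{1\ldots n}\,R^{(1,1)}_{1\bar j}(u+n-1)\cdots R^{(1,1)}_{n\bar j}(u)\,\Pi_{1\ldots n},
\]
with an analogous statement for $\Pi_{\bar 1\ldots\bar m}$ in \eqref{R_k_n}. The proof is local: one application of the $R^{(1,1)}$-YBE at the degenerate point shows that $R^{(1,1)}_{12}(-1)$ can be commuted through the adjacent pair $R^{(1,1)}_{1\bar j}(u+1)R^{(1,1)}_{2\bar j}(u)$, after which $\Pi_{12}R^{(1,1)}_{12}(-1)=0$ kills the result; since $\Pi_{1\ldots n}$ is generated by the adjacent symmetrizers $\Pi_{i,i+1}$, iterating this swap along the string yields the claim. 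The lemma simultaneously legitimises the restriction of $R^{(n,m)}(u)$ to $V^{(n,m)}$ and supplies the key move for the induction below.

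Next I would prove YBE at the mixed level $(R^{(n,1)},R^{(n,1)},R^{(1,1)})$ and its cyclic variants by induction on $n$: expanding the two copies of $R^{(n,1)}$ via \eqref{R_k_1} turns the identity into an equality in $\bigotimes^{2n+1}\mathbb{C}^2$, and the isolated $R^{(1,1)}$ factor is commuted past the fused block by $2n$ applications of the $R^{(1,1)}$-YBE, with the transport lemma redistributing the outer symmetrizers at each stage. A second induction on $m$ in the barred tensor factors, using \eqref{R_k_n} and $\Pi_{\bar 1\ldots\bar m}$, then upgrades this to YBE for the general triple $(R^{(k,n)},R^{(k,l)},R^{(n,l)})$.

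The main obstacle is combinatorial bookkeeping. The shifts $u,u+1,\ldots,u+n-1$ in \eqref{R_k_1} and $u,u-1,\ldots,u-m+1$ in \eqref{R_k_n} are arranged precisely so that at every invocation of the $R^{(1,1)}$-YBE in the induction the residual factor evaluates at the degenerate argument $-1$ and is absorbed by an adjacent symmetrizer; verifying that no unabsorbed residues at generic spectral parameter ever appear --- at any intermediate stage of the double induction --- is the real technical content of the proof, the algebra itself reducing to repeated use of \eqref{YBE_vert} at level $(1,1)$ together with $R^{(1,1)}(-1)\Pi_{12}=0$.
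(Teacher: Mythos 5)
The paper itself does not prove Proposition~\ref{YBE_prop}: it is quoted from \cite{krs81,ks_qism,konno}, so there is no in-paper argument to compare yours against. Your outline is essentially the standard Kulish--Reshetikhin--Sklyanin fusion argument from those references, and its two ingredients are the right ones: the observation that $R^{(1,1)}(-1)=c(I-P)$ is killed on either side by the symmetrizer $\Pi_{12}$, and the transport lemma $\Pi_{1\ldots n}\,R^{(1,1)}_{1\bar j}(u+n-1)\cdots R^{(1,1)}_{n\bar j}(u)=\Pi_{1\ldots n}\,R^{(1,1)}_{1\bar j}(u+n-1)\cdots R^{(1,1)}_{n\bar j}(u)\,\Pi_{1\ldots n}$, proved exactly as you indicate (one application of the level-$(1,1)$ YBE moves $R^{(1,1)}_{i,i+1}(-1)$ through the adjacent pair of factors, after which it is absorbed by the symmetrizer on the left; the unit shifts between neighbouring arguments in \eqref{R_k_1} and \eqref{R_k_n} are what produce the degenerate argument $-1$ there). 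The subsequent double induction, with the transport lemma used to drop the projector sandwiched between two fused strings so that products of fused operators restricted to $V^{(n,m)}$ become $\Pi(\hbox{string of elementary }R\hbox{'s})\Pi$, is the standard and correct route to \eqref{YBE_vert} for a general triple.

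Two places where your sketch is looser than it should be. First, to pass from $\Pi_{1\ldots n}X(1-P_{i,i+1})=0$ for all adjacent pairs to $\Pi_{1\ldots n}X(1-\Pi_{1\ldots n})=0$ you need more than the phrase ``$\Pi_{1\ldots n}$ is generated by the adjacent symmetrizers'': the precise fact is the factorization $1-\Pi_{1\ldots n}=\sum_i(1-s_i)B_i$ in $\mathbb{C}[S_n]$ (equivalently, the kernel of the total symmetrizer lies in the sum of the images of the adjacent antisymmetrizers), and it should be stated and used. Second, your closing claim that the shifts are arranged so that \emph{every} invocation of the level-$(1,1)$ YBE in the induction occurs at the degenerate argument is inaccurate: in the ``zipper'' step, where the third factor is commuted through the interleaved string of elementary $R$-matrices, the YBE is applied at generic arguments (the relevant differences of spectral parameters all equal the fixed difference carried by the third factor), and no residual degenerate factors arise or need absorbing; the point $-1$ enters only through the transport lemma. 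With these two points tightened, your proposal is a correct reconstruction of the cited proof.
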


	\medskip

	Assume that for the $R$-matrix $R^{(1,1)}(u)$ we managed to find the intertwining vectors $\psi^{(1)}(u)^a_b$ and construct the corresponding SOS model with Boltzmann weights $W^{(1,1)}$ as in the formula (\ref{vertex_IRF}). Then it is possible to construct the SOS model with weights $W^{(n, m)}$ satisfying (\ref{IRF_model_condition}) for the $R$-matrix $R^{(n,m)}(u)$.
	
	Consider the intertwining vector $\psi^{(1)}(u)^a_b$ in $\mathbb{C}^2$ depending on the spectral parameter $u$ and integer-valued SOS state parameters $a$ and $b$. By definition, it obeys the condition
	\begin{equation} \label{vector_psi_1_condition}
		\psi^{(1)}(u)^a_b\neq 0\Leftrightarrow |a-b|=1
	\end{equation}
	and connect $R^{(1,1)}$ with $W^{(1,1)}$ by means of the vertex-SOS correspondence
	\begin{equation}\label{v_IRF_11} 
		R^{(1,1)}(u-v) \, \psi^{(1)}(u)^a_b\otimes\psi^{(1)}(v)^b_c = \sum\limits_{b^\prime}\psi^{(1)}(u)^{b'}_c\otimes\psi^{(1)}(v)^a_{b'} \,
		W^{(1,1)}\left(\left.\begin{array}{cc}
			a & b\\
			b' & c
		\end{array}\right|u-v\right).
	\end{equation}
	Define the intertwining vector
	\begin{equation}  \label{fused_vector}
		\psi^{(k)}(u)^a_b = \Pi_{1\ldots k} \, \psi^{(1)}(u+k-1)^a_{c_1}\otimes\ldots\otimes\psi^{(1)}(u+1)^{c_{k-2}}_{c_{k-1}}\otimes\psi^{(1)}(u)^{c_{k-1}}_b \,,
	\end{equation}
	where $c_1, \ldots, c_{k-1}$ satisfy the condition $|c_1-a| = |c_2-c_1| = \ldots = |b - c_{k-1}| = 1.$ From \eqref{vector_psi_1_condition} it follows that
	\begin{equation} \nonumber
		\psi^{(k)}(u)^a_b\neq 0\Leftrightarrow (b-a)\in\{-k+2j,j=0,1,2,\ldots,k\}.
	\end{equation}

	\begin{prop}[\cite{date86,date88}] \label{vertex_SOS_fusion_prop}
		
		Let $R^{(1,1)},$ $W^{(1,1)}$ and $\psi^{(1)}$ obey (\ref{v_IRF_11}), and let the definition \eqref{fused_vector} of the intertwining vector $\psi^{(k)}(u)^a_b$ be independent of $c_1, \ldots, c_{k-1}$ for $k=n,m$. Then the equation (\ref{vertex_IRF}) holds for $R^{(n,m)}$ obtained with the help of the fusion procedure (\ref{R_k_n}), and the SOS Boltzmann weights are given by the formula
		\begin{equation} \nonumber
			W^{(n,m)}\left(\left.\begin{array}{cc}
				a & b\\
				b^\prime & c
			\end{array}\right|u\right)
			=
			\sum\limits_{a_1,\ldots,a_{m-1}}\prod\limits_{i=1}^m W^{(n,1)}\left(\left.\begin{array}{cc}
				a_{i-1} & b_{i-1}\\
				a_i & b_i
			\end{array}\right|u-m+i\right) .
		\end{equation}
		Here $b_0=b, \, b_m=c, \, a_0=a, \, a_m=b^\prime$, the numbers $b_1, b_2, \ldots, b_{m-1}$ satisfy the condition
		\begin{equation} \nonumber
			|b_1-b| = |b_2-b_1| = \ldots = |c-b_{m-1}| = 1 \, ,
		\end{equation}
		and $W^{(n,1)}$ is defined as follows:
		\begin{equation}  \nonumber
			W^{(n,1)}\left(\left.\begin{array}{cc}
				\tilde{a} & \tilde{b}\\
				\tilde{b}^\prime & \tilde{c}
			\end{array}\right|u\right)
			=
			\sum\limits_{\tilde{b}_1,\ldots,\tilde{b}_{n-1}}\prod\limits_{i=1}^n
			W^{(1,1)}\left(\left.\begin{array}{cc}
				\tilde{a}_{i-1} & \tilde{a}_i\\
				\tilde{b}_{i-1} & \tilde{b}_i
			\end{array}\right|u+n-i\right)
		\end{equation}
		where $\tilde{a}_0 = \tilde{a}, \, \tilde{a}_n = \tilde{b}, \, \tilde{b}_0 = \tilde{b}^\prime, \, \tilde{b}_n = \tilde{c}$, and $\tilde{a}_1, \tilde{a}_2, \ldots, \tilde{a}_{n-1}$ meet the condition
		\begin{equation} \nonumber
			|\tilde{a}_1-\tilde{a}| = |\tilde{a}_2-\tilde{a}_1| = \ldots = |\tilde{b}-\tilde{a}_{n-1}| = 1 \, .
		\end{equation}
		In so doing, the definition of $W^{(n,1)}$ does not depend on the choice of $\tilde{a}_1, \tilde{a}_2, \ldots, \tilde{a}_{n-1}$, and the definition of $W^{(n,m)}$ is independent of how one chooses $b_1, b_2, \ldots, b_{m-1}$.
	\end{prop}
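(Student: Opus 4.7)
The plan is to mirror the two-step fusion construction of $R^{(n,m)}$ in \eqref{R_k_1}--\eqref{R_k_n}: first establish the vertex-SOS correspondence at level $(n,1)$ with weights $W^{(n,1)}$, and then iterate along the vertical direction to obtain $R^{(n,m)}$ and $W^{(n,m)}$. Throughout I will rely on the hypothesis that the fused vector $\psi^{(k)}(u)^a_b$ is independent of the intermediate states $c_1,\ldots,c_{k-1}$ (for $k=n,m$); this will allow the outer symmetrizers appearing in \eqref{R_k_1}--\eqref{R_k_n} to act as the identity on the vectors produced at each stage.

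\textbf{Horizontal step.} Expand $\psi^{(n)}(u)^a_b$ via \eqref{fused_vector} and tensor with $\psi^{(1)}(v)^b_c$ placed in $V_{\bar 1}$. Apply the factors of $R^{(n,1)}(u-v)$ from \eqref{R_k_1} from right to left; at the $k$-th step the piece $R^{(1,1)}_{n-k+1,\bar 1}(u-v+k-1)$ meets the pair $\psi^{(1)}(u+k-1)^{c_{n-k}}_{c_{n-k+1}}\otimes\psi^{(1)}(v)^{c_{n-k+1}}_{b_{k-1}}$ (with $b_0:=c$), and \eqref{v_IRF_11} turns this into a sum over $b_k$ of $\psi^{(1)}(u+k-1)^{b_k}_{b_{k-1}}\otimes\psi^{(1)}(v)^{c_{n-k}}_{b_k}$ weighted by $W^{(1,1)}\!\left(\left.{c_{n-k}\;c_{n-k+1}\atop b_k\;b_{k-1}}\right|u-v+k-1\right)$. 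After all $n$ iterations, the outer $\Pi_{1\ldots n}$ reassembles (by intermediate-independence of $\psi^{(n)}$) the vector $\psi^{(n)}(u)^{b_n}_c$ in the first $n$ slots, while $V_{\bar 1}$ holds $\psi^{(1)}(v)^a_{b_n}$. Setting $b':=b_n$ and relabeling $\tilde a_i:=c_i$, $\tilde b_{n-i}:=b_i$ identifies the accumulated coefficient with the proposition's expression for $W^{(n,1)}$.

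\textbf{Vertical step.} Apply $R^{(n,m)}(u-v)$ from \eqref{R_k_n} to $\psi^{(n)}(u)^a_b\otimes\psi^{(m)}(v)^b_c$, expanding $\psi^{(m)}$ via \eqref{fused_vector} into $\psi^{(1)}(v+m-1)^b_{a_1}\otimes\ldots\otimes\psi^{(1)}(v)^{a_{m-1}}_c$ occupying $V_{\bar 1}\otimes\ldots\otimes V_{\bar m}$. Apply the factors $R^{(n,1)}_{1\ldots n,\bar k}(u-v-m+k)$ from right to left; at the $k$-th step, the $(n,1)$-level correspondence just proved transforms $\psi^{(n)}(u)^{b_{k-1}}_{a_{k-1}}\otimes\psi^{(1)}(v+m-k)^{a_{k-1}}_{a_k}$ into a sum over $b_k$ of $\psi^{(n)}(u)^{b_k}_{a_k}\otimes\psi^{(1)}(v+m-k)^{b_{k-1}}_{b_k}$ weighted by $W^{(n,1)}\!\left(\left.{b_{k-1}\;a_{k-1}\atop b_k\;a_k}\right|u-v-m+k\right)$, where $b_0:=a$, $a_0:=b$, $a_m:=c$. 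After all $m$ iterations, $\Pi_{\bar 1\ldots\bar m}$ reassembles $\psi^{(m)}(v)^a_{b_m}$, and setting $b':=b_m$ yields the claimed product formula for $W^{(n,m)}$.

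\textbf{Independence and main obstacle.} The independence of $W^{(n,1)}$ from $\tilde a_1,\ldots,\tilde a_{n-1}$ and of $W^{(n,m)}$ from $b_1,\ldots,b_{m-1}$ is inherited from the hypothesis: both sides of the resulting vertex-SOS relation depend only on the fused vectors, which by assumption do not depend on intermediates, so the $W$-coefficients extracted at each iteration are uniquely determined and hence path-independent. The main technical obstacle is the treatment of the outer projectors in \eqref{R_k_1}, \eqref{R_k_n}: one must verify, using Proposition \ref{YBE_prop} and the idempotency of $\Pi$, that these symmetrizers can be commuted past the string of $R^{(1,1)}$-factors so as to coincide with the projector appearing in \eqref{fused_vector}, after which the intermediate-independence hypothesis delivers the fused vectors on the right-hand side.
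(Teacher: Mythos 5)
The paper itself does not prove this proposition (it is quoted from \cite{date86,date88}), so your proposal can only be measured against the standard fusion argument — and your scheme is exactly that argument: iterate \eqref{v_IRF_11} through the string in \eqref{R_k_1} to get the $(n,1)$ correspondence, then iterate that correspondence through \eqref{R_k_n}. Your bookkeeping of slots and spectral parameters, and the relabelings $\tilde a_i=c_i$, $\tilde b_{n-i}=b_i$, are correct. Two points, however, are genuinely open in your write-up. First, the step you yourself call the main obstacle is never carried out, and it is not a formality: as written you apply the $R^{(1,1)}$-string to the \emph{unsymmetrized} product $\psi^{(1)}(u+n-1)^a_{c_1}\otimes\ldots\otimes\psi^{(1)}(u)^{c_{n-1}}_b$, whereas the actual argument of $R^{(n,1)}(u-v)$ contains the inner symmetrizer of \eqref{fused_vector}. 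What is needed is the fusion lemma
$\Pi_{1\ldots n}R^{(1,1)}_{1\bar1}(u-v+n-1)\cdots R^{(1,1)}_{n\bar1}(u-v)\,(\Pi_{1\ldots n}\otimes 1)=\Pi_{1\ldots n}R^{(1,1)}_{1\bar1}(u-v+n-1)\cdots R^{(1,1)}_{n\bar1}(u-v)$, i.e.\ $\Pi\,(\text{string})\,(1-\Pi)=0$, together with its analogue for the $\bar\jmath$-spaces at the second stage. This does \emph{not} follow from the conclusion of Proposition~\ref{YBE_prop} (the fused Yang--Baxter equation), which is the tool you point to; it follows from the elementary Yang--Baxter equation combined with the degeneration $R^{(1,1)}(-1)=c(I-P)$ (for $n=2$: writing $A_{12}=1-\Pi_{12}\propto R^{(1,1)}_{12}(-1)$, one has $\Pi_{12}R_{1\bar1}(u+1)R_{2\bar1}(u)A_{12}=-\,\Pi_{12}R_{2\bar1}(u+1)R_{1\bar1}(u)A_{12}=-\,\Pi_{12}A_{12}R_{1\bar1}(u)R_{2\bar1}(u+1)=0$, and adjacent transpositions give general $n$). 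Note that this degeneration condition is not even among the hypotheses of the proposition as stated; it holds for \eqref{R_7V_rat} with $c=-1$, but your proof must invoke it explicitly. Without this lemma your iteration computes the wrong left-hand side of \eqref{vertex_IRF}.

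Second, your justification of the path-independence claims (``the $W$-coefficients extracted at each iteration are uniquely determined and hence path-independent'') tacitly assumes that the families $\{\psi^{(n)}(u)^{b'}_c\otimes\psi^{(1)}(v)^a_{b'}\}_{b'}$ and $\{\psi^{(n)}(u)^{b'}_c\otimes\psi^{(m)}(v)^a_{b'}\}_{b'}$ are linearly independent: only then does equality of two vector sums force equality of coefficients. That independence is not a hypothesis of this proposition (in the present paper it is a separate statement, Proposition~\ref{vectors_linear_independence_prop}, valid only for $w\notin\mathbb{Z}$, and it is required explicitly as a hypothesis in Proposition~\ref{YBE_SOS_from_vertex_prop}). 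So either add linear independence as an assumption, or prove the path-independence of the explicit sums of products of $W^{(1,1)}$ directly (e.g.\ from the SOS Yang--Baxter equation for $W^{(1,1)}$, as in \cite{date86,date88}), rather than deducing it from uniqueness of coefficients.
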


	\medskip

	\begin{prop} \label{YBE_SOS_from_vertex_prop}
		Consider a family of operators $R^{(n,m)}(u)$ such that \eqref{YBE_vert} holds for any triple $R^{(k,n)}(u)$, $R^{(k,l)}(u)$ and $R^{(n,l)}(u)$, and every operator $R^{(n,m)}(u)$ in this family is associated to an SOS model with Boltzmann weights $W^{(n,m)}$ by means of intertwining vectors $\psi^{(n)}(u)^a_b$, $\psi^{(m)}(u)^a_b$ and the relation (\ref{vertex_IRF}). Consider a set of three natural numbers $k, n, l$. Let the sets of intertwining vectors
		\begin{equation} \label{intertwining_vect_sets}
			\{\psi^{(i)}(u)^a_b\}_{b\in\{a-i+2j,\:j=0,1,\ldots,i\}},\quad\{\psi^{(i)}(u)^b_c\}_{b\in\{c-i+2j,\:j=0,1,\ldots,i\}}, \qquad i=k, n, l
		\end{equation}
		be lineraly independent for any $a,c\in\mathbb{Z}$. Then the triple of functions $W^{(k,n)}$, $W^{(k,l)}$ and $W^{(n,l)}$ obey the SOS Yang-Baxter equation \eqref{YBE_IRF}.
	\end{prop}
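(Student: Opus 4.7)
The plan is to apply the vertex Yang--Baxter equation \eqref{YBE_vert} to a three-fold tensor product of intertwining vectors with matching middle indices, expand both sides by three successive uses of the vertex-SOS correspondence \eqref{vertex_IRF}, and extract \eqref{YBE_IRF} from the linear independence hypothesis \eqref{intertwining_vect_sets}.

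Concretely, I would choose spectral parameters $u_1,u_2,u_3$ on the factors of $V^k\otimes V^n\otimes V^l$ with $u_1-u_2=v-w$, $u_1-u_3=u-w$, $u_2-u_3=u-v$, and consider the initial vector
\[
\Psi_0 \;=\; \psi^{(k)}(u_1)^{a}_{f}\otimes\psi^{(n)}(u_2)^{f}_{e}\otimes\psi^{(l)}(u_3)^{e}_{d}.
\]
Applying the left-hand side of \eqref{YBE_vert} to $\Psi_0$ in the order $R^{(n,l)}_{23}$, $R^{(k,l)}_{13}$, $R^{(k,n)}_{12}$, one checks at each step that the two relevant intertwining vectors share their middle index --- initially $e$ for the pair on spaces $2,3$, then $f$ for spaces $1,3$ after the first step, and finally a new summed index $g$ for spaces $1,2$ after the second step --- so that every $R$ may be rewritten via \eqref{vertex_IRF}, contributing factors $W^{(n,l)}$, $W^{(k,l)}$ and $W^{(k,n)}$ respectively. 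The output is a sum of vectors of the form $\psi^{(k)}(u_1)^{b''}_{d}\otimes\psi^{(n)}(u_2)^{b'}_{b''}\otimes\psi^{(l)}(u_3)^{a}_{b'}$ whose coefficients, after a direct relabeling of the six boundary indices to those of \eqref{YBE_IRF}, are exactly the left-hand side of \eqref{YBE_IRF}. The mirror computation for the right-hand side of \eqref{YBE_vert} produces the same family of tensor products with coefficients equal to the right-hand side of \eqref{YBE_IRF}.

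Since \eqref{YBE_vert} holds, the two expansions agree as elements of $V^k\otimes V^n\otimes V^l$, and the remaining task is to equate their coefficients. This is where \eqref{intertwining_vect_sets} enters: grouping terms by $\psi^{(k)}(u_1)^{b''}_{d}$ and invoking independence of $\{\psi^{(k)}(u_1)^{b}_{d}\}_{b}$ separates the identity by values of $b''$; then, at fixed $b''$, the remaining tensor $\psi^{(n)}(u_2)^{b'}_{b''}\otimes\psi^{(l)}(u_3)^{a}_{b'}$ is disentangled by projecting onto the independent family $\{\psi^{(l)}(u_3)^{a}_{b'}\}_{b'}$ and using that each $\psi^{(n)}(u_2)^{b'}_{b''}$ is nonzero when the admissibility condition holds. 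The resulting scalar identity is precisely \eqref{YBE_IRF} at the chosen boundary labels; since those are arbitrary, \eqref{YBE_IRF} follows in full generality. The main technical point is this two-step unwinding of linear independence through the triple tensor product; the rest is a careful bookkeeping of indices, tracking how the middle indices pair up at each application of \eqref{vertex_IRF} to produce the three face-weights of \eqref{YBE_IRF}.
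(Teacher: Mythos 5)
Your proposal is correct and follows essentially the same route as the paper: act with both orderings of the vertex Yang--Baxter equation on $\psi^{(k)}\otimes\psi^{(n)}\otimes\psi^{(l)}$ with matching adjacent indices, expand each side by three uses of \eqref{vertex_IRF}, and use linear independence of the families \eqref{intertwining_vect_sets} (together with nonvanishing of the middle factor) to equate coefficients of the resulting triple tensor products, which is exactly the paper's argument via the set \eqref{intertwining_vect_tens_prods}. The only differences are cosmetic --- your choice of spectral parameters $u_1,u_2,u_3$ and index labels versus the paper's rewriting \eqref{YBE_vert_3} followed by a relabeling at the end.
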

	\begin{proof}
		$ $\\
		The equation \eqref{YBE_vert} is equivalent to
		\begin{equation} \label{YBE_vert_3}
			R^{(k,n)}(u-v)R^{(k,l)}(u-w)R^{(n,l)}(v-w)=R^{(n,l)}(v-w)R^{(k,l)}(u-w)R^{(k,n)}(u-v) \, .
		\end{equation}
		Act on $\psi^{(k)}(u)^a_b\otimes\psi^{(n)}(v)^b_c\otimes\psi^{(l)}(w)^c_d$ by the LHS of \eqref{YBE_vert_3}. Using the vertex-SOS correspondence \eqref{vertex_IRF} three times one obtains
		\begin{align} 
			\nonumber & R^{(k,n)}(u-v) R^{(k,l)}(u-w) R^{(n,l)}(v-w) \, \psi^{(k)}(u)^a_b\otimes\psi^{(n)}(v)^b_c\otimes\psi^{(l)}(w)^c_d \\
			\nonumber & = \sum\limits_{b^\prime,c^{\prime\prime}}\psi^{(k)}(u)^{c^{\prime\prime}}_{d}\otimes\psi^{(n)}(v)^{b^\prime}_{c^{\prime\prime}}\otimes\psi^{(l)}(w)^a_{b^\prime} \\
			\label{YBE_vert_IRF_left} & \times \sum\limits_{c^\prime}
			W^{(k,n)}\left(\left.\begin{array}{cc}
				b^\prime & c^\prime\\
				c^{\prime\prime} & d
			\end{array}\right|u-v\right)
			W^{(k,l)}\left(\left.\begin{array}{cc}
				a & b\\
				b^\prime & c^\prime
			\end{array}\right|u-w\right)
			W^{(n,l)}\left(\left.\begin{array}{cc}
				b & c\\
				c^\prime & d
			\end{array}\right|v-w\right) .
		\end{align}
		Now act on $\psi^{(k)}(u)^a_b\otimes\psi^{(n)}(v)^b_c\otimes\psi^{(l)}(w)^c_d$ by the RHS of \eqref{YBE_vert_3}. Applying the vertex-SOS correspondence \eqref{vertex_IRF} three times we find that
		\begin{align}
			\nonumber & R^{(n,l)}(v-w)R^{(k,l)}(u-w)R^{(k,n)}(u-v) \, \psi^{(k)}(u)^a_b\otimes\psi^{(n)}(v)^b_c\otimes\psi^{(l)}(w)^c_d \\
			\nonumber & =\sum\limits_{b^{\prime},c^{\prime\prime}}\psi^{(k)}(u)^{c^{\prime\prime}}_d\otimes\psi^{(n)}(v)^{b^{\prime}}_{c^{\prime\prime}}\otimes\psi^{(l)}(w)^{a}_{b^{\prime}} \\
			\label{YBE_vert_IRF_right_2} & \times\sum\limits_{c^\prime}W^{(n,l)}\left(\left.\begin{array}{cc}
				a & c^\prime\\
				b^{\prime} & c^{\prime\prime}
			\end{array}\right|v-w\right)W^{(k,l)}\left(\left.\begin{array}{cc}
				c^\prime & c\\
				c^{\prime\prime} & d
			\end{array}\right|u-w\right)W^{(k,n)}\left(\left.\begin{array}{cc}
				a & b\\
				c^\prime & c
			\end{array}\right|u-v\right) .
		\end{align}
		Since the sets of the form (\ref{intertwining_vect_sets}) are linearly independent, the set
		\begin{equation} \label{intertwining_vect_tens_prods}
			\{\psi^{(k)}(u)^{c^{\prime\prime}}_d\otimes\psi^{(n)}(v)^{b^{\prime}}_{c^{\prime\prime}}\otimes\psi^{(l)}(w)^{a}_{b^{\prime}}:\:\psi^{(n)}(v)^{b^{\prime}}_{c^{\prime\prime}}\neq 0\},
		\end{equation}
		where $c^{\prime\prime}\in\{d-k+2j,\: j=0,1,\ldots,k\},\: b^\prime\in\{a-l+2j,\: j=0,1,\ldots,l\}$, is also linearly independent.
		Therefore, comparing in (\ref{YBE_vert_IRF_left}) and (\ref{YBE_vert_IRF_right_2}) the coefficients on the terms of the form (\ref{intertwining_vect_tens_prods}) one obtains that the relation
		\begin{align}
			\nonumber & \sum\limits_{c^\prime}
			W^{(k,n)}\left(\left.\begin{array}{cc}
				b^\prime & c^\prime\\
				c^{\prime\prime} & d
			\end{array}\right|u-v\right)
			W^{(k,l)}\left(\left.\begin{array}{cc}
				a & b\\
				b^\prime & c^\prime
			\end{array}\right|u-w\right)
			W^{(n,l)}\left(\left.\begin{array}{cc}
				b & c\\
				c^\prime & d
			\end{array}\right|v-w\right) \\
			\label{YBE_IRF_2} &
			= \sum\limits_{c^\prime}
			W^{(n,l)}\left(\left.\begin{array}{cc}
				a & c^\prime\\
				b^{\prime} & c^{\prime\prime}
			\end{array}\right|v-w\right)
			W^{(k,l)}\left(\left.\begin{array}{cc}
				c^\prime & c\\
				c^{\prime\prime} & d
			\end{array}\right|u-w\right)
			W^{(k,n)}\left(\left.\begin{array}{cc}
				a & b\\
				c^\prime & c
			\end{array}\right|u-v\right) .
		\end{align}
		holds for any $a,b,c,d,c^{\prime\prime},b^\prime$. Swapping around $u-v$ and $v-w$ in \eqref{YBE_IRF_2} and changing $c^{\prime\prime}\to e, \, b^\prime\to f, \, c^\prime\to g$ we obtain the equivalent relation \eqref{YBE_IRF}.
	\end{proof}

	\section{The rational seven-vertex model and the corresponding SOS model} \label{7V_model_section}
	
	In the present paper we consider the rational seven-vertex model, its $R$-matrix has the form
	\begin{equation} \label{R_7V_rat}
		R^{(1,1)}(u)=\begin{pmatrix}
			u+1 & 0 & 0 & 0\\
			0 & u & 1 & 0\\
			0 & 1 & u & 0\\
			\alpha^2 u(u+1) & 0 & 0 & u+1
		\end{pmatrix} .
	\end{equation}
	It is a limiting case of the trigonometric seven-vertex model \cite{zabrodin}.
	
	Let us find the intertwining vectors and the SOS Boltzmann weights $W^{(1,1)}$ corresponding to $R^{(1,1)}$. To do this multiply the both sides of the relation \eqref{vertex_IRF} with $n = m = 1$ by the permutation operator $P:$
	\begin{equation} \label{vertex_IRF_fundamental}
		\check{R}^{(1,1)}(u-v) \, \psi^{(1)}(u)^a_b\otimes\psi^{(1)}(v)^b_c = \sum\limits_{b^\prime} \psi^{(1)}(v)^a_{b^\prime}\otimes\psi^{(1)}(u)^{b^\prime}_c \, W^{(1,1)}\left(\left.\begin{array}{cc}
			a & b\\
			b^\prime & c
		\end{array}\right|u-v\right),
	\end{equation}
	where $\check{R}^{(1,1)}(u-v) = P R^{(1,1)}(u-v).$
	Among the vectors $\psi^{(1)}(u)^a_b$ only $X_l(u) \equiv \psi^{(1)}(u)^l_{l+1}$ and $Y_l(u) \equiv \psi^{(1)}(u)^l_{l-1}$ are non-zero.
	
	In order to find $X_l(u)$ choose $a=l$, $b=l+1$, $c=l+2$ in (\ref{vertex_IRF_fundamental}). Then the equation (\ref{vertex_IRF_fundamental}) takes the form
	\begin{equation} \label{vertex_IRF_Xl_2}
		\check{R}^{(1,1)}(u-v) \, X_l(u)\otimes X_{l+1}(v) = \lambda_l(u-v) \, X_l(v)\otimes X_{l+1}(u),
	\end{equation}
	where
	\begin{equation} \nonumber
		\lambda_l(u-v) =
		W^{(1,1)}\left(\left.\begin{array}{cc}
			l & l+1\\
			l+1 & l+2
		\end{array}\right|u-v\right).
	\end{equation}
	By virtue of the condition \eqref{vector_psi_1_condition} only one term remains in the RHS of \eqref{vertex_IRF_Xl_2}. Representing the vectors in the coordinate form
	\begin{equation} \nonumber
		X_l(u) =
		\begin{pmatrix}
			\beta_l(u) \\ \gamma_l(u)
		\end{pmatrix}
	\end{equation}
	one obtains the system of equations equivalent to (\ref{vertex_IRF_Xl_2}):
	\begin{align}
		\label{vertex_IRF_Xl_3_1} &
		(u-v+1) \, \beta_l(u) \, \beta_{l+1}(v) = \lambda_l(u-v) \, \beta_l(v) \, \beta_{l+1}(u) \, \\
		\label{vertex_IRF_Xl_3_2} &
		\beta_l(u) \, \gamma_{l+1}(v) + (u-v) \, \gamma_l(u) \, \beta_{l+1}(v) = \lambda_l(u-v) \, \beta_l(v) \, \gamma_{l+1}(u) \, , \\
		\label{vertex_IRF_Xl_3_3} &
		(u-v) \, \beta_l(u) \, \gamma_{l+1}(v) + \gamma_l(u) \, \beta_{l+1}(v) = \lambda_l(u-v) \, \gamma_l(v) \, \beta_{l+1}(u) \, , \\
		\label{vertex_IRF_Xl_3_4} &
		\alpha^2(u-v) \, (u-v+1) \, \beta_l(u) \, \beta_{l+1}(v) + (u-v+1) \, \gamma_l(u) \, \gamma_{l+1}(v) = \lambda_l(u-v) \, \gamma_l(v) \, \gamma_{l+1}(u) \, .
	\end{align}
	The product of RHS of \eqref{vertex_IRF_Xl_3_1} and \eqref{vertex_IRF_Xl_3_4} is equal to the product of RHS of \eqref{vertex_IRF_Xl_3_2} and \eqref{vertex_IRF_Xl_3_3}. Then the same must be true for the LHS, that is
	\begin{align}
		\nonumber & [\beta_l(u)\gamma_{l+1}(v) + (u-v)\gamma_l(u)\beta_{l+1}(v)][(u-v)\beta_l(u)\gamma_{l+1}(v) + \gamma_l(u)\beta_l(v)] \\
		\label{requirement_vert_IRF} & = (u-v+1)\beta_l(u)\beta_{l+1}(v) [\alpha^2(u-v)(u-v+1)\beta_l(u)\beta_{l+1}(v)+(u-v+1)\gamma_l(u)\gamma_{l+1}(v)].
	\end{align}
	This is the necessary condition for solvability of the system (\ref{vertex_IRF_Xl_3_1}~--~\ref{vertex_IRF_Xl_3_4}). Dividing \eqref{requirement_vert_IRF} by $(\gamma_l(u)\beta_{l+1}(v))^2$ we obtain the equation
	\begin{equation} \label{vertex_IRF_Xl_5}
		[z(l,u,v)-1]^2=\alpha^2(u-v+1)^2y(l,u)^2,
	\end{equation}
	where
	\begin{equation} \nonumber
		z(l,u,v) = \frac{\beta_l(u)\gamma_{l+1}(v)}{\gamma_l(u)\beta_{l+1}(v)}, \qquad y(l,u) = \frac{\beta_l(u)}{\gamma_l(u)}.
	\end{equation}
	Consider the solution of \eqref{vertex_IRF_Xl_5} of the form $z(l,u,v)-1=-\alpha(u-v+1)y(l,u)$. Substituting it into the previous equality one finds
	\begin{equation} \label{vertex_IRF_Xl_7}
		\frac{\gamma_{l+1}(v)}{\beta_{l+1}(v)}+\alpha(-v+1)=\frac{\gamma_l(u)}{\beta_l(u)}-\alpha u.
	\end{equation}
	Let $\beta_l(u)=1$. The LHS of (\ref{vertex_IRF_Xl_7}) does not depend on $u$, and the RHS is independent of $v$. Consequently, the left- and right-rand sides of \eqref{vertex_IRF_Xl_7} are equal to a function of the integer-valued argument $l$, denote it by $-\alpha c_l$. Then $\gamma_{l+1}(v) = \alpha(v-(c_l+1)), \; \gamma_l(u) = \alpha(u-c_l).$
	Assuming $c_l = l + t$, where $t$ is an arbitrary real-valued constant, we obtain
	\begin{equation} \label{Xl}
		\psi(u)^l_{l+1}=X_l(u,t)=\begin{pmatrix}
			1\\ \alpha(u-l-t)
		\end{pmatrix}.
	\end{equation}
	The substitution of \eqref{Xl} into \eqref{vertex_IRF_Xl_3_1} gives
	\begin{equation} \label{lambda_l}
		W^{(1,1)}\left(\left.\begin{array}{cc}
			l & l+1\\
			l+1 & l+2
		\end{array}\right|u-v\right)=\lambda_l(u-v)=u-v+1.
	\end{equation}
	It is not difficult to verify that \eqref{Xl} and \eqref{lambda_l} satisfy the relations (\ref{vertex_IRF_Xl_3_2}~--~\ref{vertex_IRF_Xl_3_4}).
	
	Choosing in (\ref{vertex_IRF_fundamental}) $a = l, b = l-1, c= l-2$ one finds the intertwining vectors $\psi^{(1)}(u)^l_{l-1}$ by similar calculations:
	\begin{equation} \label{Yl}
		\psi(u)^l_{l-1}=\begin{pmatrix}
			1\\ \alpha(u+l+s)
		\end{pmatrix},
	\end{equation}
	where $s$ is a free parameter.
	
	It is possible to show that \eqref{vertex_IRF_fundamental} holds for all other possible combinations of $a$, $b$ and $c$ and find other SOS statistical weights.
	
	Thus for the $R$-matrix \eqref{R_7V_rat} we have obtained the intertwining vectors
	\begin{equation} \label{intertwining_vectors}
		\psi^{(1)}(u)^l_{l+1}=\begin{pmatrix}
			1\\ \alpha(u-l-t)
		\end{pmatrix},\quad \psi^{(1)}(u)^l_{l-1}=\begin{pmatrix}
			1\\ \alpha(u+l+s)
		\end{pmatrix}
	\end{equation}
	and the corresponding SOS Boltzmann weights
	\begin{align}
		\label{IRF_weights_pm_2} &
		W^{(1,1)}\left(\left.\begin{array}{cc}
			l\pm 2 & l\pm 1\\
			l\pm 1 & l
		\end{array}\right|u\right)=u+1, \\
		\label{IRF_weights_pm_pm} &
		W^{(1,1)}\left(\left.\begin{array}{cc}
			l & l\pm 1\\
			l\pm 1 & l
		\end{array}\right|u\right)=\frac{\mp u+l+w}{l+w} \\
		\label{IRF_weights_pm_mp} &
		W^{(1,1)}\left(\left.\begin{array}{cc}
			l & l\pm 1\\
			l\mp 1 & l
		\end{array}\right|u\right)=\frac{u\left(l\pm 1+w\right)}{l+w},
	\end{align}
	where $w=\nicefrac{1}{2}(s+t)$.
	If $w\notin \mathbb{Z}$, the vectors from \eqref{intertwining_vectors} obey the condition of proposition~\ref{YBE_SOS_from_vertex_prop}. Then $W^{(1,1)}$ from (\ref{IRF_weights_pm_2} -- \ref{IRF_weights_pm_mp}) satisfy the SOS Yang-Baxter equation \eqref{YBE_IRF}.
	
	According to the fusion procedure the vectors $\psi^{(k)}(u)^a_b$ are constructed from $\psi^{(1)}(u)^a_b$ by means of the formula (\ref{fused_vector}).

	\medskip

	\begin{prop} \label{vector_fusion_prop}
		The intertwining vectors
		\begin{equation} \label{fused_vector_proof}
			\psi^{(k)}(u)^a_b = \Pi_{1\ldots k} \, \psi^{(1)}(u+k-1)^{c_0}_{c_1}\otimes\ldots\otimes\psi^{(1)}(u+1)^{c_{k-2}}_{c_{k-1}}\otimes\psi^{(1)}(u)^{c_{k-1}}_{c_k}
		\end{equation}
		(where $c_0 = a, c_k = b$), obtained from \eqref{intertwining_vectors} with the help of the formula \eqref{fused_vector}, do not depend on the choice of $c_1, c_2,\ldots, c_{k-1}$ provided that
		\begin{equation} \label{iv_condition}
			|c_1-a|=|c_2-c_1|=\ldots=|b-c_{k-1}|=1.
		\end{equation}
	\end{prop}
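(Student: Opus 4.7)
The plan is to reduce the claim to invariance under a single elementary change of one intermediate index, and then verify that local invariance by a direct computation. Any two admissible sequences $(a,c_1,\ldots,c_{k-1},b)$ and $(a,c_1',\ldots,c_{k-1}',b)$ satisfying (\ref{iv_condition}) can be connected by a finite chain of local moves: a local move at position $i$ requires $c_{i-1}=c_{i+1}=:l$, switches $c_i$ between the two allowed values $l+1$ and $l-1$, and leaves every other intermediate index unchanged. This is the standard fact that lattice paths on $\mathbb{Z}$ with unit steps and prescribed endpoints are connected by swapping adjacent up--down pairs. Hence it suffices to prove invariance of $\psi^{(k)}(u)^a_b$ under one such move.

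The key algebraic ingredient is that $\Pi_{1\ldots k}$ annihilates any tensor which is antisymmetric in a pair of adjacent positions $i,i+1$: pairing each $\sigma\in S_k$ with $\sigma\cdot(i,i+1)$ in the symmetrization sum gives cancelling contributions. Denoting by $u_1=u+k-i$ and $u_2=u+k-i-1$ the spectral parameters at positions $i$ and $i+1$ in the product (\ref{fused_vector_proof}), and writing
\begin{equation} \nonumber
A = X_l(u_1)\otimes Y_{l+1}(u_2), \qquad B = Y_l(u_1)\otimes X_{l-1}(u_2)
\end{equation}
for the local two-factor blocks corresponding to the two choices $c_i=l\pm 1$, it is enough to show that $A-B$ lies in the one-dimensional antisymmetric subspace of $\mathbb{C}^2\otimes\mathbb{C}^2$ spanned by $e^+\otimes e^- - e^-\otimes e^+$. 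The remaining $k-2$ factors of the tensor product are identical in both configurations and play no role.

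The verification is a short explicit calculation using (\ref{intertwining_vectors}). The $e^+\otimes e^+$ components of $A$ and $B$ are both equal to $1$ and cancel; the $e^+\otimes e^-$ and $e^-\otimes e^+$ components of $A-B$ evaluate to $\pm 2\alpha(l+w)$ with $w=(s+t)/2$, so together they give a scalar multiple of the antisymmetric generator; and the $e^-\otimes e^-$ coefficient factors as a constant times $u_1-u_2-1$, which vanishes precisely because $u_1-u_2=1$ by construction of the fusion (\ref{fused_vector_proof}). Thus $A-B$ is antisymmetric in positions $i,i+1$, and the symmetrizer kills it. The main obstacle is this last cancellation: it is where the unit spacing of spectral parameters built into (\ref{fused_vector_proof}) is essential, in parallel with the role of $R^{(1,1)}(-1)\propto I-P$ in the fusion (\ref{R_k_n}) of the $R$-operator itself.
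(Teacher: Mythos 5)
Your proof is correct, and it takes a genuinely different route from the paper's. You reduce the statement to invariance under a single elementary flip $c_i=l+1 \leftrightarrow c_i=l-1$ (with $c_{i-1}=c_{i+1}=l$), using the standard connectivity of unit-step lattice paths with fixed endpoints, and then check that the difference of the two local blocks, $X_l(u_1)\otimes Y_{l+1}(u_2)-Y_l(u_1)\otimes X_{l-1}(u_2)$, is proportional to the antisymmetric vector and hence killed by $\Pi_{1\ldots k}$; the computation is right (the off-diagonal components are $\pm 2\alpha(l+w)$ and the lower-lower component carries the factor $u_1-u_2-1$, which vanishes because consecutive spectral parameters in \eqref{fused_vector_proof} differ by $1$). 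The paper instead argues globally: it computes each tensor factor explicitly and observes that the $p$-th vector of type \eqref{Xl} equals \eqref{psi_p} and the $q$-th vector of type \eqref{Yl} equals \eqref{psi_m}, independently of its position in the product, so any admissible choice of $c_1,\ldots,c_{k-1}$ produces the same multiset of factors and the symmetrizer erases the only remaining difference, their order. Your argument is closer in spirit to the generic fusion mechanism (the role of $u_1-u_2=1$ mirrors that of $R^{(1,1)}(-1)\propto I-P$ in Proposition~\ref{YBE_prop}) and would work whenever the local exchange relation holds, but it needs the path-connectivity lemma as an extra ingredient; the paper's computation is special to the rational vectors \eqref{intertwining_vectors} but gives as a by-product the explicit factored form of $\psi^{(k)}(u)^a_b$, which is what feeds into the polynomial realization \eqref{intertwining_vector_polynomial} later on.
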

	\begin{proof}
		$ $\\
		If the condition \eqref{iv_condition} holds, then $c_{i+1}=c_i+\sigma_i, \; \sigma_i=\pm 1, \; i=0,1,\ldots,k-1.$
		Denote by $k_+$ the number of values $+1$ among $\{\sigma_i\}$ (i.e. the number of vectors of the type \eqref{Xl} in the RHS of \eqref{fused_vector_proof}), and by $k_-$ -- the number of values $(-1)$ (i.e. the number of vectors of the type \eqref{Yl} in the RHS of \eqref{fused_vector_proof}). Note that
		\begin{equation} \nonumber
			\begin{cases}
				k_+-k_-=b-a \\ k_++k_-=k
			\end{cases}
			\Leftrightarrow
			\begin{cases}
				k_+=\frac{1}{2}(k+b-a) \\ k_-=\frac{1}{2}(k-b+a)
			\end{cases} .
		\end{equation}
		Let $\sigma_i=+1$ for $i=i_1,i_2,\ldots,i_{k_+}$. In this case $c_{i_p}=a-i_p+2(p-1), \; p=1,2,\ldots,k_+.$
		Then substituting $c_{i_p}$ into \eqref{intertwining_vectors} we obtain that the vector
		\begin{equation} \label{psi_p}
			\psi(u+k-i_p-1)^{c_{i_p}}_{c_{i_p}+1}=\begin{pmatrix}
				1 \\ \alpha(u+k-a-2p+1-t)
			\end{pmatrix}
		\end{equation}
		is the $(i_p+1)$-th factor in the tensor product from \eqref{fused_vector_proof} (where $p=1,2,\ldots,k_+$).
		Let $\sigma_j=-1$ for $j=j_1,j_2,\ldots,j_{k_-}$. In this instance $c_{j_q}=a+j_q-2(q-1), \; q=1,2,\ldots,k_-.$ Then substituting $c_{j_q}$ into \eqref{intertwining_vectors} one obtains that the vector
		\begin{equation} \label{psi_m}
			\psi(u+k-j_q-1)^{c_{j_q}}_{c_{j_q}-1}=\begin{pmatrix}
				1 \\ \alpha(u+k+a-2q+1+s)
			\end{pmatrix}.
		\end{equation}
		is the $(j_q+1)$-th factor in the tensor product \eqref{fused_vector_proof} (here $q=1,2,\ldots,k_-$).
		It can be seen that every vector in the tensor product from the RHS of \eqref{fused_vector_proof} depends only on its number among the vectors of type \eqref{Xl} or \eqref{Yl}. Consequently, if \eqref{iv_condition} holds, then the RHS of \eqref{fused_vector_proof} is the tensor product of $k_+$ vectors \eqref{psi_p} and $k_-$ vectors \eqref{psi_m} independently of how one chooses $c_1,c_2\ldots,c_{k-1}$. Only the order of their tensor multiplication depends on the choice of $c_1,c_2\ldots,c_{k-1}$. But the result of how the symmetrizer $\Pi_{1\ldots k}$ acts on this tensor product is independent of the factors' order.
	\end{proof}

	\medskip

	From the constructed SOS model (\ref{IRF_weights_pm_2} -- \ref{IRF_weights_pm_mp}) it is possible to obtain the model considered in the paper \cite{zapiski}. Its state parameters of vertices take nonnegative integer values, the Boltzmann weights obey the restriction \eqref{IRF_model_condition}, and non-zero statistical weights have the form
	\begin{align}
		\label{W01} &
		W_0^{(1,1)}\left(\left.\begin{array}{cc}
			l\pm 2 & l\pm 1\\
			l\pm 1 & l
		\end{array}\right|u\right)
		= u + 1, \quad
		W_0^{(1,1)}\left(\left.\begin{array}{cc}
			l & l\pm 1\\
			l\pm 1 & l
		\end{array}\right|u\right)
		= \frac{\mp u+l+1}{l+1}, \\
		\label{W02} & 
		W_0^{(1,1)}\left(\left.\begin{array}{cc}
			l & l\pm 1\\
			l\mp 1 & l
		\end{array}\right|u\right)
		= \frac{u\sqrt{l \left(l+2\right)}}{l+1}.
	\end{align}
	In order to do this modify the intertwining vectors and introduce
	\begin{equation} \label{intertwining_vectors_tilde}
		\tilde{\psi}^{(1)}(u)^a_b=\frac{1}{F(a,b)}\psi^{(1)}(u)^a_b,
	\end{equation}
	where $F(a,b)$ is a function of two integer-valued arguments set by the conditions
	\begin{equation} \nonumber
		|a-b|\neq 1\Rightarrow F(a,b)=0,\quad \frac{F(l,l+1)}{F(l,l-1)}=\left(\frac{l+1+w}{l-1+w}\right)^{1/4},\quad F(a,b)=F(b,a).
	\end{equation}
	Substituting $\psi^{(1)}(u)^a_b$ from \eqref{intertwining_vectors_tilde} into \eqref{vertex_IRF_fundamental} one obtains
	\begin{equation} \nonumber
		\check{R}^{(1,1)}(u-v) \, \tilde{\psi}^{(1)}(u)^a_b\otimes\tilde{\psi}^{(1)}(v)^b_c =
		\sum\limits_{b^\prime} \tilde{\psi}^{(1)}(v)^a_{b^\prime}\otimes\tilde{\psi}^{(1)}(u)^{b^\prime}_c \,
		\widetilde{W}^{(1,1)}\left(\left.\begin{array}{cc}
			a & b\\
			b^\prime & c
		\end{array}\right|u-v\right),
	\end{equation}
	where
	\begin{equation} \label{W_transform}
		\widetilde{W}^{(1,1)}\left(\left.\begin{array}{cc}
			a & b\\
			b^\prime & c
		\end{array}\right|u-v\right) =
		\frac{F(a,b^\prime)F(b^\prime,c)}{F(a,b)F(b,c)} \,
		W^{(1,1)}\left(\left.\begin{array}{cc}
			a & b\\
			b^\prime & c
		\end{array}\right|u-v\right) ,
	\end{equation}
	namely
	\begin{align}
		\label{IRF_weights_tilde} & \widetilde{W}^{(1,1)}\left(\left.\begin{array}{cc}
			l\pm 2 & l\pm 1\\
			l\pm 1 & l
		\end{array}\right|u\right)
		= u + 1, \quad
		\widetilde{W}^{(1,1)}\left(\left.\begin{array}{cc}
			l & l\pm 1\\
			l\pm 1 & l
		\end{array}\right|u\right)
		= \frac{\mp u+l+w}{l+w}, \\
		\label{IRF_weights_tilde_1} & \widetilde{W}^{(1,1)}\left(\left.\begin{array}{cc}
			l & l\pm 1\\
			l\mp 1 & l
		\end{array}\right|u\right)
		= \frac{u\sqrt{\left(l-1+w\right)\left(l+1+w\right)}}{l+w}
	\end{align}
	(the transform of SOS Boltzmann weights similar to \eqref{W_transform} was considered in \cite{vega}).
	For $w\notin \mathbb{Z}$ the vectors $\tilde{\psi}^{(1)}(u)^a_b$ obey the condition of the proposition~\ref{YBE_SOS_from_vertex_prop}. Therefore, the Boltzmann weights (\ref{IRF_weights_tilde} -- \ref{IRF_weights_tilde_1}) satisfy the SOS Yang-Baxter equation \eqref{YBE_IRF}. The model (\ref{W01} -- \ref{W02}) is obtained if one considers (\ref{IRF_weights_tilde} -- \ref{IRF_weights_tilde_1}) with nonnegative state parameters and $w=1$.
	
	Let us show that from the relation \eqref{YBE_IRF} for the model (\ref{IRF_weights_tilde} -- \ref{IRF_weights_tilde_1}) with integer-valued state parameters follows \eqref{YBE_IRF} for the model (\ref{W01} -- \ref{W02}). For this purpose consider \eqref{YBE_IRF} for $\widetilde{W}^{(1,1)}$ with $a,b,c,d,e,f\geq 0$ and $w$ close to $1$, but not integer. The terms with negative $g$, which do not correspond to the model (\ref{W01} -- \ref{W02}), can appear in \eqref{YBE_IRF} only in two cases: if $a=c=e=1,\: b=d=f=0$ or if $a=c=e=0,\: b=d=f=1$. In these cases take the limit $w\to 1$ in both sides of \eqref{YBE_IRF}. The limits of both sides exist, because every Boltzmann weight $\widetilde{W}$ has a finite limit (denominators are not equal to zero even for the mentioned summands with negative $g$), and the sums are finite in the bots sides of \eqref{YBE_IRF}. From \eqref{IRF_weights_tilde_1} it follows that the limit of
	$\widetilde{W}^{(1,1)}\left(\left.\begin{array}{cc}
		0 & \pm 1\\
		\mp 1 & 0
	\end{array}\right|u\right)$
	as $w\to 1$ is zero, so the limit of mentioned unwanted terms with negative $g$ in \eqref{YBE_IRF} is also equal to zero, and we are left with only the summands corresponding to the model (\ref{W01} -- \ref{W02}).

	\section{Realization on the space of polynomials}
	\label{polynomial_sect}
	
	Our final objective is to find the Boltzmann weights $W^{(n,m)}$ of the SOS model corresponding to the seven-vertex lattice model. To do this we need to construct the operator $R^{(n,m)}$ by means of the fusion procedure. It is convenient to realize the spaces on which it acts as the spaces of polynomials of order less than a fixed number. With this approach $R^{(n,m)}$ can be written in a compact form in terms of a difference operator.
	
	$R$-operator has the form $\Pi_{1\ldots n}T$, where $T$ is some operator on $(\mathbb{C}^2)^{\otimes n}$. Intertwining vectors have the form $\Pi_{1\ldots n}\Psi$, where $\Psi$ is a tensor in $(\mathbb{C}^2)^{\otimes n}$. We realize the intertwining vectors and the $R$-operator on the space of polynomials using the fact that $S^n\mathbb{C}^2$ is isomorphic to the space of homogeneous polynomials of degree $n$ in two variables $\lambda_1,\lambda_2$. Isomorphism $S$ reads
	\begin{equation} \nonumber
		S: \; \Pi_{1\ldots n} \, e^{i_1}\otimes\ldots\otimes e^{i_n} \mapsto \lambda_{i_1}\ldots\lambda_{i_n}, \qquad i_1,\ldots,i_n\in\{1,2\}.
	\end{equation}
	Tensor $\Psi$ can be written in the basis of $(\mathbb{C}^2)^{\otimes n}$ (summation over repeated indices is implied): $\Psi=\Psi_{i_1\ldots i_n} \, e^{i_1}\otimes\ldots\otimes e^{i_n}.$
	Then the polynomial realization of $\Pi_{1\ldots n}\Psi$ has the form
	\begin{equation} \nonumber
		[S\Pi_{1\ldots n}\Psi](\lambda) = \Psi_{i_1\ldots i_n}S\left(\Pi_{1\ldots n} \, e^{i_1}\otimes\ldots\otimes e^{i_n}\right) = \Psi_{i_1\ldots i_n}\lambda_{i_1}\ldots\lambda_{i_n}.
	\end{equation}
	Hence using the formulae \eqref{fused_vector}, \eqref{psi_p}, \eqref{psi_m} one obtains that the intertwining vectors in the polynomial representation have the form:
	\begin{align}
		\nonumber \psi^{(n)}(\lambda_1,\lambda_2|u)^a_b = & \,
		\prod_{p=1}^{n_+}[\lambda_1+\alpha\lambda_2(u+n-a-2p+1-t)] \\
		\label{intertwining_vector_polynomial} & \times\prod_{q=1}^{n_-}[\lambda_1+\alpha\lambda_2(u+n+a-2q+1+s)],
	\end{align}
	where $n_+=\nicefrac{1}{2}(n+b-a), \, n_-=\nicefrac{1}{2}(n-b+a).$
	
	The action of an operator $T$ on a tensor $\Psi$ reads: $(T\Psi)_{i_1\ldots i_n}=T_{i_1\ldots i_n}^{j_1\ldots j_n}\Psi_{j_1\ldots j_n}.$
	In the realization on the space of polynomials the action of operators has the following form \cite[section~2.5]{der_ch}:
	\begin{equation} \nonumber
		[S\Pi_{1\ldots n}T\Pi_{1\ldots n}\Psi](\lambda) = 
		\frac{1}{n!}\left.T(\lambda,\partial_\mu)[S\Pi_{1\ldots n}\Psi](\mu)\right|_{\mu=0},
	\end{equation}
	where $T(\lambda,\mu)\equiv\lambda_{j_1}\ldots\lambda_{j_n}T_{j_1\ldots j_n}^{i_1\ldots i_n}\mu_{i_1}\ldots\mu_{i_n}$
	is called the symbol of the operator $T$.
	
	Let us find the symbol of the operator $R^{(n,m)}(u)$ \eqref{R_k_n} acting on $V^{(n,m)}$. At first one should find the symbol $R^{(n,1)}_{1\ldots n,\bar{j}}(u|\lambda,\mu)$ of the operator $R^{(n,1)}_{1\ldots n,\bar{j}}(u)$ \eqref{R_k_1}. This symbol is an operator in the space $V_{\bar{j}}$.
	\begin{equation} \nonumber
		R^{(n,1)}_{1\ldots n,\bar{j}}(u|\lambda,\mu)=R^{(1,1)}_{1\bar{j}}(u+n-1|\lambda,\mu)\ldots R^{(1,1)}_{n-1\bar{j}}(u+1|\lambda,\mu)R^{(1,1)}_{n\bar{j}}(u|\lambda,\mu),
	\end{equation}
	where $R^{(1,1)}_{i\bar{j}}(u+n-i|\lambda,\mu)$ is the symbol of the operator $R^{(1,1)}_{i\bar{j}}(u+n-i)$ in the space $V_i$, it is an operator in the space $V_{\bar{j}}$. Proceeding from variables $\lambda$ to the representation on polynomials in one variable $z$: $\lambda_1=-z$, $\lambda_2=1$ one obtains after some algebra \cite{VA_19}
	\begin{align}
		\nonumber &
		R^{(n,1)}_{1\ldots n,\bar{j}}(u|\lambda,\mu) \\
		\nonumber &
		=
		\begin{pmatrix}
			u\Delta_++z\alpha^{-1}\Delta_- & -\alpha^{-1}\Delta_-\\
			z^2\alpha^{-1}\Delta_--nz\Delta_+-\alpha u(u+n)\Delta_- & (u+n)\Delta_+-z\alpha^{-1}\Delta_-
		\end{pmatrix}(\mu_2-\mu_1z)^n,
	\end{align}
	where the operators $\Delta_+$ and $\Delta_-$ are defined as
	\begin{equation} \nonumber
		[\Delta_\pm f](z) \equiv \frac{1}{2}(f(z+\alpha) \pm f(z-\alpha)) \, .
	\end{equation}
	Passing from the symbol of the operator to the operator itself one finds
	\begin{equation} \label{R_k_1_mat_form}
		R^{(n,1)}_{1\ldots n,\bar{j}}(u)=\begin{pmatrix}
			u\Delta_++z\alpha^{-1}\Delta_- & -\alpha^{-1}\Delta_-\\
			z^2\alpha^{-1}\Delta_--nz\Delta_+-\alpha u(u+n)\Delta_- & (u+n)\Delta_+-z\alpha^{-1}\Delta_-
		\end{pmatrix}.
	\end{equation}
	The matrix $R^{(n,1)}_{1\ldots k,\bar{j}}(u)$ in \eqref{R_k_1_mat_form} is written in the basis of the space $V_{\bar{j}} \simeq \mathbb{C}^2$, the operators $\Delta_+$ and $\Delta_-$ act on the space of polynomials of degree less or equal than $n$ which is isomorphic to $\mathrm{Sym}(V_1\otimes\ldots\otimes V_n) \simeq S^n\mathbb{C}^2 .$
	
	The next step is the determination of $R^{(n,m)}(u)$ and of its symbol $R^{(n,m)}(u|\lambda,\mu)$, where the space of polynomials in $\lambda_1, \lambda_2$ is associated to $\mathrm{Sym}V_{\bar{1}}\otimes V_{\bar{2}}\otimes\ldots\otimes V_{\bar{m}}.$ In order to do this one needs to construct the symbol $\Lambda(u|\lambda,\mu)$ of the operator $R^{(n,1)}_{1\ldots n,\bar{j}}(u)$, where $V_{\bar{j}}$ interpreted as the space of polynomials \cite{der_ch}.
	\begin{align} 
		\nonumber \Lambda(u|\lambda,\mu) = \lambda_{i_1}\left(R^{(n,1)}_{1\ldots n,\bar{j}}(u)\right)_{i_1}^{j_1}\mu_{j_1} = & \,
		u(\lambda_1+\lambda_2z)\mu_1\Delta_+-(\lambda_1+\lambda_2z)(\mu_2-\mu_1z)\alpha^{-1}\Delta_- \\
		\label{Lambda} & +(u+n)\lambda_2(\mu_2-\mu_1z)\Delta_+-\alpha u(u+n)\mu_1\lambda_2\Delta_-,
	\end{align}
	where $\left(R^{(n,1)}_{1\ldots n,\bar{j}}(u)\right)_{i_1}^{j_1}$ is the element $(i_1,j_1)$ of the matrix \eqref{R_k_1_mat_form}. The symbol $R^{(n,m)}(u|\lambda,\mu)$ is expressed in terms of $\Lambda(u|\lambda,\mu)$ in the following way \cite[section~2.6]{der_ch}:
	\begin{equation} \label{R_k_n_symb_1}
		R^{(n,m)}(u|\lambda,\mu) =
		\Lambda(u|\lambda,\mu) \, \Lambda(u-1|\lambda,\mu) \ldots \Lambda(u-m+1|\lambda,\mu).
	\end{equation}
	
	The vertex-SOS correspondence \eqref{vertex_IRF} in the polynomial representation with $u = 0,\:v = -u$ takes the form
	\begin{align} 
		\nonumber &
		\left.\left[R^{(n,m)}(u|\lambda,\partial_\mu)\psi^{(m)}(\mu_1,\mu_2|-u)^b_c\right]\right|_{\mu=0}\psi^{(n)}(z|0)^a_b \\
		\label{vertex_IRF_polynomials} &
		= \sum\limits_{b^\prime}\psi^{(n)}(z|0)^{b^\prime}_c\psi^{(m)}(\lambda_1,\lambda_2|-u)^a_{b^\prime}
		W^{(n,m)}\left(\left.\begin{array}{cc}
			a & b\\
			b^\prime & c
		\end{array}\right|u\right),
	\end{align}
	where the intertwining vectors $\psi^{(n)}(z|u)^a_b$ are obtained from \eqref{intertwining_vector_polynomial} by the substitution $\lambda_1=-z,\lambda_2=1:$
	\begin{equation} \label{intertwining_vector_polynomial_z}
		\psi^{(n)}(z|u)^a_b = (-1)^n\prod_{p=1}^{n_+}[z-\alpha(u+n-a-2p+1-t)]
		\prod_{q=1}^{n_-}[z-\alpha(u+n+a-2q+1+s)],
	\end{equation}
	and $n_\pm = \nicefrac{1}{2}\left(n\pm(b-a)\right)$.
	Considering only the coefficients of $\lambda_1^m$ in the both sides of \eqref{vertex_IRF_polynomials} one obtains from \eqref{Lambda} and \eqref{R_k_n_symb_1} that
	\begin{equation} \label{defining_relation}
		O_m(u,b,c) \, \psi^{(n)}(z|0)^a_b=\sum\limits_{b^\prime}W^{(n,m)}\left(\left.\begin{array}{cc}
			a & b\\
			b^\prime & c
		\end{array}\right|u\right)\psi^{(n)}(z|0)^{b^\prime}_c,
	\end{equation}
	where the operator $O_m(u,b,c)$ has the form
	\begin{equation} \label{O_m}
		O_m(u,b,c) = \left.\Lambda^\prime(u|\partial_\mu)\Lambda^\prime(u-1|\partial_\mu)\ldots\Lambda^\prime(u-m+1|\partial_\mu)\psi^{(m)}(\mu_1,\mu_2|-u)^b_c\right|_{\mu=0},
	\end{equation}
	\begin{equation} \label{Lambda_prime}
		\Lambda^\prime(u|\mu)\equiv-\alpha^{-1}[(\mu_2-\mu_1z)\Delta_--\alpha u\mu_1\Delta_+].
	\end{equation}
	
	Let us show that
	\begin{align} 
	\nonumber O_m(u,b,c) = & \, \alpha^{-m}\gamma\left(z-u_1,-u+m_+\right)\Delta_-^{m_+}\gamma\left(z-u_1, u\right) \\
	\label{O_m_1} & \times \gamma\left(z-u_2, -u+m\right) \Delta_-^{m_-}
	\gamma\left(z-u_2, u-m_+\right),
	\end{align}
	where $m_\pm = \nicefrac{1}{2}\left(m\pm(c-b)\right),$
	\begin{equation} \label{u1u2}
		u_1 = \alpha\left[-u+\frac{1}{2}(m-b-c)-t\right], \qquad
		u_2 = \alpha\left[-u+\frac{1}{2}(m+b+c)+s\right],
	\end{equation}
	\begin{equation} \label{gamma_function}
	 \gamma(z, p) = (2\alpha)^p \, \frac{\Gamma\left(\frac{z}{2\alpha}+\frac{1}{2}+\frac{p}{2}\right)}{\Gamma\left(\frac{z}{2\alpha}+\frac{1}{2}-\frac{p}{2}\right)}, \quad p \in \mathbb{C},
	\end{equation}
	and $\Gamma$ is the gamma function.
	In the derivation of \eqref{O_m_1} the ``star-triangle'' relation
	\begin{equation} \label{STR}
		\gamma(z,k) \, \Delta_-^{k+l} \, \gamma(z,l) = \Delta_-^l \, \gamma(z,k+l) \, \Delta_-^k \, .
	\end{equation}
	plays an important role.
	It holds for nonnegative integer numbers $k$ and $l$ and can be proven by induction with the help of the formulae
	\begin{align}
	\label{Delta_gamma} & \Delta_- \, \gamma(z,p) = \gamma(z,p-1)[z\Delta_- + p\alpha\Delta_+], \\
	\label{gamma_Delta} & \gamma(z,p) \, \Delta_- = [\Delta_-z - p\alpha\Delta_+] \, \gamma(z,p-1)
	\end{align}
	($\Delta_-$ acts on the variable $z$), which follow from the definition of the operators $\Delta_+$ and $\Delta_-$.
	
	By means of \eqref{Delta_gamma} and \eqref{gamma_Delta} it is possible to show that the RHS of \eqref{O_m_1} is equal to
	\begin{align}
	\nonumber & \alpha^{-m}\prod\limits_{l=0}^{\overset{m_+-1}{\longleftarrow}}\left\{\left[z-\alpha\left(-u+\frac{m-b-c}{2}-t\right)\right]\Delta_- + \alpha(u-l)\Delta_+\right\} \\
	\label{O_m_2} & \times\prod\limits_{l'=0}^{\overset{m_--1}{\longleftarrow}}\left\{\left[z-\alpha\left(-u+\frac{m+b+c}{2}+s\right)\right]\Delta_- + \alpha(u-m_+-l')\Delta_+\right\},
	\end{align}
	where for $N_2 \geq N_1$
	\begin{equation} \nonumber
		\prod\limits_{l=N_1}^{\overset{N_2}{\longleftarrow}} A_l \equiv A_{N_2} A_{N_2-1} \ldots A_{N_1+1} A_{N_1} \, .
	\end{equation}
	Then the RHS of \eqref{O_m_1} is a polynomial in $u$ of degree $m$. Moreover, from \eqref{O_m} it follows that $O_m(u,b,c)$ is also a polynomial in $u$ of degree $m$. Thus, if one proves \eqref{O_m_1} for some set of $m+1$ values of $u$, for example $\{0,1,2,\ldots,m\}$, then it will hold for any $u$.

	\medskip

	\begin{prop} \label{O_m_prop}
		The equality \eqref{O_m_1} holds for $u\in N_m=\{0,1,2,\ldots,m\}$.
	\end{prop}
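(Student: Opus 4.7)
My plan is to establish \eqref{O_m_1} at each of the $m+1$ points $u \in N_m$ by induction on $m$; combined with the polynomial-degree observation preceding the proposition, this will extend the identity to arbitrary $u$. The base case $m=0$ is immediate: the product of $\Lambda'$s is empty, $\psi^{(0)} \equiv 1$, so $O_0(0,b,b) = 1$, while the right-hand side of \eqref{O_m_1} collapses to $\gamma(z-u_1,0)^2\gamma(z-u_2,0)^2 = 1$ since $\gamma(\cdot,0)=1$ and $\Delta_-^0 = I$.

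For the inductive step (assuming \eqref{O_m_1} holds at level $m-1$, and hence, by the degree argument, for all $u$), I would factor the intertwining vector in the polynomial representation via Proposition~\ref{vector_fusion_prop},
\[
\psi^{(m)}(\mu|-u)^b_c \;=\; \psi^{(1)}(\mu|-u+m-1)^b_{c_1}\cdot\psi^{(m-1)}(\mu|-u)^{c_1}_c, \qquad c_1\in\{b-1,b+1\},
\]
independently of the choice of $c_1$. Inserting this into \eqref{O_m} and applying the rightmost operator $\Lambda'(u-m+1|\partial_\mu)$ via the Leibniz rule distributes the two $\mu$-derivatives between the two factors. The derivatives of the linear $\psi^{(1)}$ produce simple $\mu$-independent coefficients, while the remaining $m-1$ operators acting on $\psi^{(m-1)}$ recombine into $O_{m-1}$ with an appropriately shifted spectral parameter, to which the inductive hypothesis applies and produces a $\gamma$-$\Delta_-$ expression. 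Assembling this with the contribution of the $\psi^{(1)}$ factor and the outermost operator $\Lambda'(u|\partial_\mu)$ by means of the commutation identities \eqref{Delta_gamma}, \eqref{gamma_Delta} and --- crucially --- the star-triangle relation \eqref{STR} then yields the right-hand side of \eqref{O_m_1}.

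A welcome simplification occurs at $u = k$ with $0 \le k \le m-1$: one of the factors in the product equals $\Lambda'(0|\partial_\mu) = -\alpha^{-1}(\partial_{\mu_2} - \partial_{\mu_1}z)\Delta_-$ and therefore lacks any $\Delta_+$ contribution, matching the vanishing of the corresponding factor $\alpha(u-l)\Delta_+$ in the expanded form \eqref{O_m_2} on the right-hand side. At $u = m$ no such simplification is available on either side, and the full inductive reassembly must be carried through. The main obstacle will be the combinatorial bookkeeping in the Leibniz expansion: each $\Lambda'$ carries two pieces, $(\mu_2-\mu_1 z)\Delta_-$ and $u\mu_1\Delta_+$, and verifying that the resulting contributions reshuffle --- through repeated use of the star-triangle relation \eqref{STR} and the shifts in the $\gamma$-indices --- into precisely the factored form on the right-hand side of \eqref{O_m_1} is where the bulk of the computational effort resides.
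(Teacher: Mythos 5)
Your strategy genuinely diverges from the paper's (which, for integer $u\in N_m$, collapses the whole string $\Lambda^\prime(u|\mu)\cdots\Lambda^\prime(u-m+1|\mu)$ into $(-\alpha)^{-m}\Delta_-^u(\mu_2-\mu_1z)^m\Delta_-^{m-u}$, see \eqref{Lambda_product}, and then applies the star-triangle relation \eqref{STR} twice), but as written your inductive step has a genuine gap. When you expand the rightmost $\Lambda^\prime(u-m+1|\partial_\mu)$ by Leibniz on the product $\psi^{(1)}(\mu|-u+m-1)^b_{c_1}\,\psi^{(m-1)}(\mu|-u)^{c_1}_c$, only the term in which the derivative annihilates the linear factor $\psi^{(1)}$ has the structure you need: there the remaining operators $\Lambda^\prime(u)\cdots\Lambda^\prime(u-m+2)$ act on $\psi^{(m-1)}(\mu|-u)^{c_1}_c$, which is indeed $O_{m-1}(u,c_1,c)$ (times a $z$-operator coefficient standing to its right). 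In the other term the rightmost operator differentiates $\psi^{(m-1)}$, and what is left for the remaining $m-1$ operators is $\psi^{(1)}(\mu|-u+m-1)$ times a degree-$(m-2)$ polynomial, which is \emph{not} an intertwining vector $\psi^{(m-1)}(\mu|-u^\prime)$ for any $u^\prime$; moreover the $z$-operator coefficients ($z$, $\Delta_\pm$) do not commute, so the contraction cannot simply be reshuffled into ``$O_{m-1}$ with an appropriately shifted spectral parameter'' as you assert. (Peeling off the factor with the lowest parameter instead produces the complementary mismatch: the operator string starts at $u$ while the surviving polynomial carries $-u+1$.) This recombination claim is the heart of your induction and is left unproved; as stated it does not go through.

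A second symptom of the same problem: your argument never uses the hypothesis $u\in N_m$ in an essential way (the observation that $\Lambda^\prime(0)$ occurs for $0\le u\le m-1$ is never exploited), whereas integrality of $u$ is exactly what the paper's proof needs: the identity \eqref{Lambda_product} only makes sense for integer $0\le u\le m$, and it reduces $O_m(u,b,c)$ in \eqref{O_m} to $(-\alpha)^{-m}\Delta_-^u\,\psi^{(m)}(z|-u)^b_c\,\Delta_-^{m-u}$, after which \eqref{O_m_1} follows from two applications of \eqref{STR} (split into the cases $u\le m_+$ and $u\ge m_+$). If your induction really worked without integrality, it would prove \eqref{O_m_1} for all $u$ directly, which should raise suspicion. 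Finally, even granting the reduction, the ``reassembly'' via \eqref{Delta_gamma}, \eqref{gamma_Delta} and \eqref{STR} that you defer is precisely the nontrivial computation, so the proposal remains a plan rather than a proof.
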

	\begin{proof}
		$ $\\
		Consider $u\in N_m$. With the help of relations
		\begin{align}
			\nonumber & (\mu_2-\mu_1z)\Delta_-^p=\Delta_-^{p-1}[(\mu_2-\mu_1z)\Delta_-+\mu_1(p-1)\alpha\Delta_+], \\
			\nonumber & \Delta_-^p(\mu_2-\mu_1z)=[(\mu_2-\mu_1z)\Delta_--\mu_1p\alpha\Delta_+]\Delta_-^{p-1},
		\end{align}
		following from the identity $\Delta_- z = z \Delta_- + \alpha\Delta_+$, one can deduce from \eqref{Lambda_prime} that
		\begin{equation} \label{Lambda_product}
			\Lambda^\prime(u|\mu)\Lambda^\prime(u-1|\mu)\ldots\Lambda^\prime(u-m+1|\mu)=(-\alpha)^{-m}\Delta_-^u(\mu_2-\mu_1z)^m\Delta_-^{m-u}.
		\end{equation}
		The substitution of \eqref{Lambda_product} into \eqref{O_m} gives
		\begin{equation} \label{O_m_3}
			O_m(u,b,c)=(-\alpha)^{-m}\Delta_-^u\psi^{(m)}(z|-u)^b_c\Delta_-^{m-u}.
		\end{equation}
		Expressing $\psi^{(m)}(z|-u)^b_c$ in terms of the function $\gamma$ \eqref{gamma_function} by means of \eqref{intertwining_vector_polynomial_z}:
		\begin{equation} \nonumber
			\psi^{(m)}(z|-u)^b_c = (-1)^m\gamma\left(z-u_1,m_+\right) \gamma\left(z-u_2,m_-\right) ,
		\end{equation}
		where $u_1$ and $u_2$ are defined in \eqref{u1u2}, and substituting the result into \eqref{O_m_3} one obtains that
		\begin{equation} \label{O_m_4}
			O_m(u,b,c) = \alpha^{-m} \Delta_-^u \, \gamma\left(z-u_1, m_+\right) \gamma\left(z-u_2,m_-\right)\Delta_-^{m-u}.
		\end{equation}
		Comparing \eqref{O_m_4} and \eqref{O_m_1}, we conclude that it remains to prove the following:
		\begin{align}
			\nonumber & \Delta_-^u \, \gamma\left(z-u_1, m_+\right) \gamma\left(z-u_2,m_-\right)\Delta_-^{m-u} \\
			\label{O_m_5} &
			= \gamma\left(z-u_1,-u+m_+\right)\Delta_-^{m_+}\gamma\left(z-u_1, u\right) \gamma\left(z-u_2, -u+m\right) \Delta_-^{m_-}
			\gamma\left(z-u_2, u-m_+\right) .
		\end{align}
		Consider the case $u\leq m_+$. From the obvious property $\gamma(z, p) \gamma(z, -p) = 1$ it follows that \eqref{O_m_5} is equivalent to
		\begin{align}
			\nonumber & \Delta_-^u \, \gamma\left(z-u_1, m_+\right) \gamma\left(z-u_2,m_-\right)\Delta_-^{m-u} \gamma\left(z-u_2, m_+-u\right) \\
			\label{O_m_6} & = \gamma\left(z-u_1,-u+m_+\right)\Delta_-^{m_+}\gamma\left(z-u_1, u\right) \gamma\left(z-u_2, -u+m\right) \Delta_-^{m_-} \, .
		\end{align}
		The proof of \eqref{O_m_6} involves \eqref{STR}:
		\begin{align}
			\nonumber & \Delta_-^u \, \gamma\left(z-u_1, m_+\right) \gamma\left(z-u_2,m_-\right)\Delta_-^{m-u} \gamma\left(z-u_2, m_+-u\right) \\
			\nonumber & = \Delta_-^u \, \gamma\left(z-u_1, m_+\right) \Delta_-^{m_+-u} \gamma\left(z-u_2, m-u\right) \Delta_-^{m_-} \\
			\nonumber & = \gamma\left(z-u_1, m_+-u\right) \Delta_-^{m_+} \gamma\left(z-u_1, u\right) \gamma\left(z-u_2, m-u\right) \Delta_-^{m_-} \, .
		\end{align}
		In the case $u\geq m_+$ the relation \eqref{O_m_5} is equivalent to
		\begin{align}
			\nonumber & \gamma\left(z-u_1,u-m_+\right) \Delta_-^u \, \gamma\left(z-u_1, m_+\right) \gamma\left(z-u_2,m_-\right)\Delta_-^{m-u} \\
			\label{O_m_7} & = \Delta_-^{m_+}\gamma\left(z-u_1, u\right) \gamma\left(z-u_2, -u+m\right) \Delta_-^{m_-}
			\gamma\left(z-u_2, u-m_+\right) .
		\end{align}
		The equality \eqref{O_m_7} can be proven with the help of \eqref{STR} similarly to \eqref{O_m_6}.
	\end{proof}

	\medskip

	Having proved the proposition~\ref{O_m_prop}, we have proven \eqref{O_m_1} for all $u\in\mathbb{C}$. Hence, from \eqref{defining_relation}, \eqref{O_m_1} and \eqref{O_m_2} follows the relation by means of which it is possible to find the Boltzmann weights $W^{(n,m)}$:
	\begin{align}
	\nonumber & (-\alpha)^{-m}\prod\limits_{l=0}^{\overset{m_+-1}{\longleftarrow}}\left\{\left[\alpha\left(-u+\frac{m-b-c}{2}-t\right)-z\right]\Delta_--\alpha(u-l)\Delta_+\right\} \\
	\nonumber & 
	\times\prod\limits_{l'=0}^{\overset{m_--1}{\longleftarrow}}\left\{\left[\alpha\left(-u+\frac{m+b+c}{2}+s\right)-z\right]\Delta_--\alpha(u-m_+-l')\Delta_+\right\}\psi^{(n)}(z|0)^a_b \\
	\label{defining_relation_1} & = \sum\limits_{b^\prime}W^{(n,m)}\left(\left.\begin{array}{cc}
	a & b\\
	b^\prime & c
	\end{array}\right|u\right)\psi^{(n)}(z|0)^{b^\prime}_c \, ,
	\end{align}
	where $m_\pm = \nicefrac{1}{2}(m\pm(c-b))$. The weights $W^{(n,m)}$ are uniquely determined from \eqref{defining_relation_1} and satisfy the SOS Yang-Baxter equation \eqref{YBE_IRF} when the following proposition holds.

	\medskip

	\begin{prop} \label{vectors_linear_independence_prop}
		For any $n \in \mathbb{N}$, any $u \in \mathbb{C}$ and any $a, c \in \mathbb{Z}$ the sets of intertwining vectors
		\begin{equation} \label{iv}
			\{\psi^{(n)}(u)^a_b\}_{b\in\{a-n+2l, \; l=0,1,\ldots,n\}}, \qquad 
			\{\psi^{(n)}(u)^b_c\}_{b\in\{c-n+2l, \; l=0,1,\ldots,n\}} \, ,
		\end{equation}
		constructed from \eqref{intertwining_vectors} with the help of fusion procedure \eqref{fused_vector}, are linearly independent if $w = \nicefrac{1}{2}(s+t) \notin \mathbb{Z}$.
	\end{prop}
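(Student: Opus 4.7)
The plan is to exploit the explicit polynomial realization \eqref{intertwining_vector_polynomial_z}, in which each intertwining vector $\psi^{(n)}(z|u)^a_b$ appears as a product of $n$ linear factors in $z$ with explicit roots. Since there are $n+1$ vectors in each of the sets in \eqref{iv} and they all lie in the $(n+1)$-dimensional space of polynomials in $z$ of degree at most $n$, linear independence is equivalent to forming a basis of that space.

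For the first set $\{\psi^{(n)}(u)^a_b\}_b$ I would parametrize $b_k = a - n + 2k$ for $k = 0, 1, \ldots, n$, so that \eqref{intertwining_vector_polynomial_z} yields $n_+ = k$ and $n_- = n-k$, and the roots of $\psi^{(n)}(z|u)^a_{b_k}$ are exactly $\{P_1, \ldots, P_k\} \cup \{M_1, \ldots, M_{n-k}\}$, where $P_p = \alpha(u + n - a - 2p + 1 - t)$ and $M_q = \alpha(u + n + a - 2q + 1 + s)$. The crucial arithmetic observation is $P_p - M_q = -2\alpha(a + p - q + w)$ together with $a + p - q \in \mathbb{Z}$: this is precisely where the hypothesis $w \notin \mathbb{Z}$ is used, guaranteeing that the $2n$ numbers $P_1, \ldots, P_n, M_1, \ldots, M_n$ are pairwise distinct. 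Given a relation $\sum_{k=0}^n c_k\,\psi^{(n)}(z|u)^a_{b_k} = 0$, I would proceed by triangular elimination: since $M_j$ is a root of $\psi^{(n)}(z|u)^a_{b_k}$ iff $k \le n - j$, evaluating at $z = M_1$ annihilates every term except $k = n$ and forces $c_n = 0$; evaluating successively at $z = M_2, \ldots, M_n$ gives $c_{n-1} = \cdots = c_1 = 0$; and the residual identity $c_0\,\psi^{(n)}(z|u)^a_{b_0} = 0$ forces $c_0 = 0$ because that polynomial is non-zero.

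The second set $\{\psi^{(n)}(u)^b_c\}_b$ is handled by the same scheme after the analogous reparametrization $b_k = c - n + 2k$. The roots now regroup into families $\tilde P_r = \alpha(u + 2n - c - 2r + 1 - t)$ for $r = 1, \ldots, n$, and $\tilde M_r = \alpha(u + c + 2r + 1 + s)$ for $r = 0, 1, \ldots, n - 1$, with $\psi^{(n)}(z|u)^{b_k}_c$ having roots $\tilde P_{k+1}, \ldots, \tilde P_n, \tilde M_0, \ldots, \tilde M_{k-1}$; non-integrality of $w$ again gives pairwise distinctness (an equality $\tilde P_r = \tilde M_{r'}$ would force $w = n - c - r - r' \in \mathbb{Z}$), and evaluating at $z = \tilde M_0, \tilde M_1, \ldots$ in that order concludes the argument. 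The main obstacle is purely combinatorial: the two families of roots shift differently with $k$ in the two cases, so the real care lies in setting up the root bookkeeping so that each substitution leaves exactly one surviving coefficient; once that is done, no further analytic input is required.
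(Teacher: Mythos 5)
Your argument is correct and is essentially the paper's own proof: both pass to the polynomial realization \eqref{intertwining_vector_polynomial_z}, use $w\notin\mathbb{Z}$ to ensure the ``plus-type'' and ``minus-type'' roots never coincide, and then kill the coefficients one by one by a triangular evaluation of the supposed linear relation at successive roots. The only (immaterial) difference is that you evaluate at the minus-type roots, eliminating coefficients from the top index downward, whereas the paper evaluates at the plus-type roots and eliminates from the bottom up.
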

	\begin{proof}
		$ $\\
		The proof uses the representation in the space of polynomials. Denote
		\begin{equation} \nonumber
			p_l(z) \equiv \psi^{(n)}(z|u)^a_{a-n+2l} = (-1)^n\prod\limits_{i=1}^{l}(z-a_i)\prod\limits_{j=l+1}^{n}(z-b_j),\quad l=0,1,2,\ldots,n,
		\end{equation}
		where $a_i = \alpha(u+n-a-2i+1-t), \; b_j = \alpha(u+n+a-2(n+1-j)+1+s), \; 1\leq i,j\leq n$
		(the expression for $\psi^{(n)}(z|u)^a_{a-n+2l}$ is obtained from \eqref{intertwining_vector_polynomial_z}). Note that, since by the data $(s+t)\notin 2\mathbb{Z}$, the following holds:
		\begin{equation} \label{a_i_neq_b_j}
			\forall i,j \quad a_i\neq b_j \, .
		\end{equation}
		Furthermore,
		\begin{equation} \label{p_l_a_i_b_j}
			p_l(a_i)=0, \; i\leq l, \qquad p_l(b_j)=0, \; j>l.
		\end{equation}
		Consider
		\begin{equation} \label{fz0}
			0=f(z)=\sum\limits_{l=0}^n\alpha_lp_l(z),
		\end{equation}
		let us prove that $\alpha_0=\alpha_1=\ldots=\alpha_n=0$. At first, $0=f(a_1)=\alpha_0p_0(a_1),$
		consequently, $\alpha_0=0$. Then, by induction, using \eqref{p_l_a_i_b_j}, it is possible to prove that $\alpha_1=\ldots=\alpha_{n-1}=0$. So, we obtain
		\begin{equation} \label{alp0}
			\alpha_np_n(z)=0,
		\end{equation}
		therefore $\alpha_n=0$.
		
		Now denote by $p_l(z)$ the second set of vectors in \eqref{iv}, express them with the help of \eqref{intertwining_vector_polynomial_z}:
		\begin{equation} \nonumber
			p_l(z) \equiv \psi^{(n)}(z|u)^{c-n+2l}_{c} 
			= (-1)^n \prod\limits_{i=1}^{l}(z-a_i)
			\prod\limits_{j=l+1}^{n}(z-b_j),\quad l=0,1,2,\ldots,n,
		\end{equation}
		where $a_i = \alpha(u+c+2i-1+s), \; b_j = \alpha(u-c+2(n+1-j)-1-t), \; 1\leq i,j\leq n.$
		The numbers $a_i$ and $b_j$ satisfy \eqref{a_i_neq_b_j} and \eqref{p_l_a_i_b_j}, which means the further proof is similar to (\ref{fz0} -- \ref{alp0}).
	\end{proof}

	\section{Computation of Boltzmann weights of SOS models} \label{Wnm_section}

	\subsection{Calculation of Boltzmann weights $W^{(n,1)}$}
	
	Consider the relation \eqref{defining_relation_1} with $m=1$, $b=c+1$:
	\begin{align}
	\nonumber & -\alpha^{-1}\{[\alpha(-u+c+1+s)-z]\Delta_--\alpha u\Delta_+\}\psi^{(n)}(z|0)^{c+k+1}_{c+1} \\
	\label{defining_relation_W_n_1_plus} & =\sum\limits_{\sigma=\pm}
	W^{(n,1)}\left(\left.\begin{array}{cc}
		c+k+1 & c+1\\
		c+k+1+\sigma & c
	\end{array}\right|u\right)
	\psi^{(n)}(z|0)^{c+k+1+\sigma}_c,
	\end{align}
	where $k\in\{-n+2j,\: j=0,1,\ldots,n\}.$ Using \eqref{intertwining_vector_polynomial_z} one obtains by direct calculation that
	\begin{align}
	\nonumber &
	W^{(n,1)}\left(\left.\begin{array}{cc}
	c+k+1 & c+1\\
	c+k+2 & c
	\end{array}\right|u\right)
	= \frac{n_-(k)\left(c+1-n_-(k)+w-u\right)}{c+k+1+w}, \\
	\label{W_n_1_c_plus_1} &
	W^{(n,1)}\left(\left.\begin{array}{cc}
	c+k+1 & c+1\\
	c+k & c
	\end{array}\right|u\right)=\frac{\left(u+n_+(k)\right)\left(c+1+n_+(k)+w\right)}{c+k+1+w} ,
	\end{align}
	where $n_\pm(k) = \nicefrac{1}{2}(n\pm k),$ $w=\nicefrac{1}{2}(s+t).$
	
	Note that the upper statistical weight in \eqref{W_n_1_c_plus_1} is equal to zero when $k=n$, as it should be, in accordance with the condition \eqref{IRF_model_condition}, since the difference of lower left and lower right parameters becomes equal to $n+2>n$.
	
	Consider the relation \eqref{defining_relation_1} with $m=1$, $b=c-1$:
	\begin{align}
	\nonumber & -\alpha^{-1}\{[\alpha(-u-c+1-t)-z]\Delta_--\alpha u\Delta_+\}\psi^{(n)}(z|0)^{c+k-1}_{c-1} \\
	\label{defining_relation_W_n_1_minus} &
	= \sum\limits_{\sigma=\pm}
	W^{(n,1)}\left(\left.\begin{array}{cc}
	c+k-1 & c-1\\
	c+k-1+\sigma & c
	\end{array}\right|u\right)
	\psi^{(n)}(z|0)^{c+k-1+\sigma}_c \, .
	\end{align}
	In this case
	\begin{align} 
	\nonumber &
	W^{(n,1)}\left(\left.\begin{array}{cc}
	c+k-1 & c-1\\
	c+k-2 & c
	\end{array}\right|u\right)
	= \frac{n_+(k)\left(c-1+n_+(k)+w+u\right)}{c+k-1+w}, \\
	\label{W_n_1_c_minus_1} &
	W^{(n,1)}\left(\left.\begin{array}{cc}
	c+k-1 & c-1\\
	c+k & c
	\end{array}\right|u\right)
	= \frac{\left(u+n_-(k)\right)\left(c-1-n_-(k)+w\right)}{c+k-1+w},
	\end{align}
	where $n_\pm(k) = \nicefrac{1}{2}(n\pm k),$ $w = \nicefrac{1}{2}(s+t),$ $k \in \{-n+2j,\: j=0,1,\ldots,n\}.$
	
	Similarly to \eqref{W_n_1_c_plus_1}, the upper statistical weight in \eqref{W_n_1_c_minus_1} equals to zero when $k=-n$, in compliance with the condition \eqref{IRF_model_condition}, because the difference of lower left and lower right parameters becomes equal to $-n-2 < -n$.

	\subsection{Calculation of Boltzmann weights $W^{(n,m)}$}

	Denote in \eqref{defining_relation_1} $b=c+\mu, \; a=c+\mu+\nu, \; b^\prime=c+\mu+\nu+\mu^\prime \, ,$
	where $\mu, \mu^\prime \in \{-m+2j,j=0,1,\ldots,m\}$, $\nu\in\{-n+2j,j=0,1,\ldots,n\}$. In this notation \eqref{defining_relation_1} can be rewritten in the following form:
	\begin{align}
	\nonumber &
	(-\alpha)^{-m} \, P_1 \, P_2 \, \psi^{(n)}(z|0)^{c+\mu+\nu}_{c+\mu} \\
	\label{defining_relation_W_n_m} &
	= \sum\limits_{\mu^\prime}
	W^{(n,m)}\left(\left.\begin{array}{cc}
		c+\mu+\nu & c+\mu\\
		c+\mu+\nu+\mu^\prime & c
	\end{array}\right|u\right)
	\psi^{(n)}(z|0)^{c+\mu+\nu+\mu^\prime}_c \, ,
	\end{align}
	where $m_\pm = \frac{1}{2}(m\mp\mu),$
	\begin{equation} \nonumber
		P_1 \equiv
		\prod\limits_{l=0}^{\overset{m_+-1}{\longleftarrow}}\left\{\left[\alpha\left(-[u-l]-[c-m_++l]-t\right)-z\right]\Delta_--\alpha[u-l]\Delta_+\right\} ,
	\end{equation}
	\begin{equation} \nonumber
		P_2 \equiv
		\prod\limits_{l'=0}^{\overset{m_--1}{\longleftarrow}}\left\{\left[\alpha\left(-[u-m_+-l^\prime]+[c+\mu-l^\prime]+s\right)-z\right]\Delta_--\alpha[u-m_+-l^\prime]\Delta_+\right\} .
	\end{equation}
	
	Applying \eqref{defining_relation_W_n_1_plus} $m_-$ times one obtains that
	\begin{align}
	\nonumber &
	(-\alpha)^{-m_-} P_2 \, \psi^{(n)}(z|0)^{c+\mu+\nu}_{c+\mu} \\
	\label{second_product_action} &
	= \sum\limits_{\sigma_0,\ldots,\sigma_{m_--1}}
	\left[\prod\limits_{i=0}^{m_--1}
	W^{(n,1)}\left(\left.\begin{array}{cc}
		c_i & d_i\\
		c_{i+1} & d_{i+1}
	\end{array}\right|u-m_+-i\right)
	\right]
	\psi^{(n)}(z|0)^{c+\mu+\nu+\Sigma}_{c-m_+}, \,
	\end{align}
	where $\sigma_j = \pm 1, \; c_i = c+\mu+\nu+\sum_{\xi=0}^{i-1}\sigma_\xi, \; d_i = c+\mu-i, \; \Sigma = \sum_{\xi=0}^{m_--1}\sigma_\xi .$
	
	Let us fix $\Sigma$ and find the coefficient of the corresponding intertwining vector $\psi^{(n)}(z|0)^{c+\mu+\nu+\Sigma}_{c-m_+}$ in \eqref{second_product_action}. We will do the computation similar to that from the proof of proposition~\ref{vector_fusion_prop}. As can be seen, $c_{i+1}=c_i+\sigma_i$. Let $\varkappa_+$ be the number of values $+1$ among $\{\sigma_\xi\}_{\xi=0}^{m_--1}$, and $\varkappa_-$ be the number of values $(-1)$. Quantities $\varkappa_+$ and $\varkappa_-$ are uniquely determined by $\Sigma$:
	\begin{equation} \label{varkappa_pm}
	\varkappa_\pm(\Sigma) = \frac{1}{2}(m_-\pm\Sigma) \, .
	\end{equation}
	Denote $c^\prime_{i}\equiv\sum_{\xi=0}^{i-1}\sigma_\xi,\; i=0,1,\ldots,m_--1.$
	Let $\sigma_{i_p}=+1$ for $p=1,2,\ldots,\varkappa_+$. Then $c^\prime_{i_p}=-i_p+2(p-1),$
	and from \eqref{W_n_1_c_plus_1} it follows that
	\begin{align} 
	\nonumber &
	W^{(n,1)}\left(\left.\begin{array}{cc}
		c_{i_p} & d_{i_p}\\
		c_{i_p+1} & d_{i_p+1}
	\end{array}\right|u-m_+-i_p\right) \\
	\label{W_n_1_c_i_p} &
	= \frac{\left(\frac{1}{2}(n-\nu)-p+1\right)\left(-u+m_-+c-\frac{1}{2}(n-\nu)+p-1+w\right)}{c+\mu+\nu+c^\prime_{i_p}+w} \, .
	\end{align}
	Let $\sigma_{j_q}=-1$ for $q=1,2,\ldots,\varkappa_-$. Then $c^\prime_{j_q}=j_q-2(q-1),$
	and from \eqref{W_n_1_c_plus_1} one obtains that
	\begin{align}
	\nonumber &
	W^{(n,1)}\left(\left.\begin{array}{cc}
		c_{j_q} & d_{j_q}\\
		c_{j_q+1} & d_{j_q+1}
	\end{array}\right|u-m_+-j_q\right) \\
	\label{W_n_1_c_j_q} &
	=\frac{\left(u-m_++\frac{1}{2}(n+\nu)-q+1\right)\left(c+\mu+\frac{1}{2}(n+\nu)-q+1+w\right)}{c+\mu+\nu+c^\prime_{j_q}+w} \, .
	\end{align}
	Substituting \eqref{W_n_1_c_i_p} and \eqref{W_n_1_c_j_q} into \eqref{second_product_action} we find that the coefficient of $\psi^{(n)}(z|0)^{c+\mu+\nu+\Sigma}_{c-m_+}$ is equal to
	\begin{align}
	\label{coefficient_psi_Sigma} &
	(\theta_1)^-_{\varkappa_+}(\theta_2)^+_{\varkappa_+} (\theta_3)^-_{\varkappa_-} (\theta_4)^-_{\varkappa_-} \, \frac{x+\Sigma}{x}\sum\limits_{c^\prime_1,\ldots,c^\prime_{m_-}}\frac{1}{x+c^\prime_1}\times\ldots\times\frac{1}{x+c^\prime_{m_--1}}\times\frac{1}{x+c^\prime_{m_-}},
	\end{align}
	where
	\begin{align}
		\label{factorial} & (y)^\sigma_k \equiv \prod\limits_{j = 0}^{k-1} (y + \sigma j), \qquad \sigma = \pm, \\
		\label{theta12} & \theta_1 = \frac{1}{2}(n-\nu), \; \theta_2 = -u+c+m_--\frac{1}{2}(n-\nu)+w, \\
		\label{theta34} & \theta_3 = u-m_++\frac{1}{2}(n+\nu), \; \theta_4 = c+\mu+\frac{1}{2}(n+\nu)+w, \; x = c+\mu+\nu+w
	\end{align}
	and the summation is over all $c^\prime_1, \ldots, c^\prime_{m_-}$ such that
	\begin{equation} \label{c_prime_condition}
	|c^\prime_i-c^\prime_{i+1}|=1,\: i=0,1,\ldots,m_--1,\quad c^\prime_0 \equiv 0,\quad c^\prime_{m_-} = \Sigma.
	\end{equation}
	Denote
	\begin{equation} \label{f_varkappa_x}
	f(\varkappa_+,\varkappa_-|x) \equiv
	\sum\limits_{c^\prime_1,\ldots,c^\prime_{m_-}}\frac{1}{x+c^\prime_1}
	\cdot \ldots \cdot 
	\frac{1}{x+c^\prime_{m_--1}} \cdot \frac{1}{x+c^\prime_{m_-}} \, ,
	\end{equation}
	this is a function of integer arguments $\varkappa_+$, $\varkappa_-$ and the parameter $x$ (because by the virtue of \eqref{varkappa_pm} the numbers $m_-$ and $\Sigma$ are expressed in terms of $\varkappa_+$ and $\varkappa_-$). With the help of \eqref{c_prime_condition} and \eqref{f_varkappa_x} it is possible to obtain the recurrence relation for $f(\varkappa_+,\varkappa_-|x)$ in the case when $\varkappa_+ + \varkappa_- \geq 2$:
	\begin{equation} \label{f_varkappa_x_recurrence}
	f(\varkappa_+,\varkappa_-|x)=\frac{1}{x+\varkappa_+-\varkappa_-}[f(\varkappa_+-1,\varkappa_-|x)+f(\varkappa_+,\varkappa_--1|x)].
	\end{equation}
	For $\varkappa_++\varkappa_-=1$ the formula \eqref{f_varkappa_x} gives
	\begin{equation} \label{f_varkappa_x_base}
	f(1,0|x)=\frac{1}{x+1},\quad f(0,1|x)=\frac{1}{x-1}.
	\end{equation}
	Then using \eqref{f_varkappa_x_recurrence} we obtain $f(\varkappa_+,\varkappa_-|x)$ by means of induction with respect to $\varkappa_++\varkappa_-$ and with the base case \eqref{f_varkappa_x_base}:
	\begin{equation} \label{f_varkappa_x_explicit}
	f(\varkappa_+,\varkappa_-|x)=\frac{{\varkappa_++\varkappa_-\choose \varkappa_+}}{\prod\limits_{i=1}^{\varkappa_+}(x+i)\prod\limits_{j=1}^{\varkappa_-}(x-j)},
	\end{equation}
	where ${\varkappa_++\varkappa_-\choose \varkappa_+}$ is a binomial coefficient.
	
	Thus, substituting $f(\varkappa_+,\varkappa_-|x)$ from \eqref{f_varkappa_x_explicit} into \eqref{coefficient_psi_Sigma}, and then substituting \eqref{coefficient_psi_Sigma} into \eqref{second_product_action}, and \eqref{second_product_action} into \eqref{defining_relation_W_n_m}, one obtains that
	\begin{align} 
	\nonumber &
	(-\alpha)^{-m} \, P_1 \, P_2 \, \psi^{(n)}(z|0)^{c+\mu+\nu}_{c+\mu} \\
	\label{defining_relation_W_n_m_1} &
	= \sum\limits_{\Sigma}
	\frac{(x+\Sigma) \, {m_-\choose \varkappa_+} \, (\theta_1)^-_{\varkappa_+}(\theta_2)^+_{\varkappa_+} (\theta_3)^-_{\varkappa_-} (\theta_4)^-_{\varkappa_-} }{x \, (x+1)^+_{\varkappa_+}(x-1)^-_{\varkappa_-}} \, (-\alpha)^{-m_+} P_1 \, \psi^{(n)}(z|0)^{c+\mu+\nu+\Sigma}_{c-m_+} \, ,
	\end{align}
	where $\varkappa_+$ and $\varkappa_-$ are expressed in terms of $\Sigma$ with the help of \eqref{varkappa_pm}, and the sum over $\Sigma$ goes from $-m_-$ to $m_-$ with step $2$.
	
	Applying \eqref{defining_relation_W_n_1_minus} $m_+$ we find that
	\begin{align}
	\nonumber &
	(-\alpha)^{-m_+} P_1 \, \psi^{(n)}(z|0)^{c+\mu+\nu+\Sigma}_{c-m_+} \\
	\label{first_product_action} &
	= \sum\limits_{\tau_0,\ldots,\tau_{m_+-1}}\left[\prod\limits_{i=0}^{m_+-1}
	W^{(n,1)}\left(\left.\begin{array}{cc}
	\tilde{c}_i & \tilde{d}_i\\
	\tilde{c}_{i+1} & \tilde{d}_{i+1}
	\end{array}\right|u-i\right)\right]\psi^{(n)}(z|0)^{c+\mu+\nu+\Sigma+T}_{c},
	\end{align}
	where $\tau_j = \pm 1, \; \tilde{c}_i = c+\mu+\nu+\Sigma+\sum_{\xi=0}^{i-1}\tau_\xi, \; \tilde{d}_i = c-m_++i, \; T = \sum_{\xi=0}^{m_+-1}\tau_\xi.$ Similarly to \eqref{second_product_action}, \eqref{first_product_action} can be transformed into
	\begin{align} 
	\nonumber & (-\alpha)^{-m_+} P_1 \, \psi^{(n)}(z|0)^{c+\mu+\nu+\Sigma}_{c-m_+} \\
	\label{first_product_action_1} & = \sum\limits_{T}\frac{(x+\Sigma+T) \, {m_+\choose \rho_+(T)} \, (\theta_5)^-_{\rho_+(T)} (\theta_6)^+_{\rho_+(T)} (\theta_7)^-_{\rho_-(T)} (\theta_8)^-_{\rho_-(T)}}{(x+\Sigma) \, (x+\Sigma+1)^+_{\rho_+(T)} (x+\Sigma-1)^-_{\rho_-(T)}}
	\, \psi^{(n)}(z)^{c+\mu+\nu+\Sigma+T}_{c}
	\end{align}
	where
	\begin{align}
	\label{rho_pm} & \rho_\pm(T) = \frac{1}{2}(m_+\pm T), \\
	\label{theta56} & \theta_5 = u + \frac{1}{2}(n-\nu-\Sigma-m_-), \quad
	\theta_6 = c + \mu - \frac{1}{2}(n-\nu-\Sigma+m_-) + w, \\
	\label{theta78} & \theta_7 = \frac{1}{2}(n+\nu+\Sigma+m_-), \quad
	\theta_8 = u + c + \mu + \frac{1}{2}(n+\nu+\Sigma-m_-) + w \, ,
	\end{align}
	and the sum over $T$ goes from $-m_+$ to $m_+$ with step $2$.
	
	Substituting \eqref{first_product_action_1} into \eqref{defining_relation_W_n_m_1} and comparing the obtained expression with the formula \eqref{defining_relation_W_n_m} for calculation of Boltzmann weights we conclude that
	\begin{align} 
		\nonumber &
		W^{(n,m)}\left(\left.\begin{array}{cc}
			c+\mu+\nu & c+\mu\\
			c+\mu+\nu+\mu^\prime & c
		\end{array}\right|u\right) \\
		\nonumber &
		= \sum\limits_{\Sigma}
		\frac{(x+\mu') \, {m_-\choose \varkappa_+} \, {m_+\choose \rho_+(\mu'-\Sigma)} \, (\theta_1)^-_{\varkappa_+}(\theta_2)^+_{\varkappa_+} (\theta_3)^-_{\varkappa_-} (\theta_4)^-_{\varkappa_-} }{ x \, (x+1)^+_{\varkappa_+} (x-1)^-_{\varkappa_-} \, (x+\Sigma+1)^+_{\rho_+(\mu'-\Sigma)} (x+\Sigma-1)^-_{\rho_-(\mu'-\Sigma)} } \\
		\label{W_n_m} & \times  (\theta_5)^-_{\rho_+(\mu'-\Sigma)} (\theta_6)^+_{\rho_+(\mu'-\Sigma)} (\theta_7)^-_{\rho_-(\mu'-\Sigma)} (\theta_8)^-_{\rho_-(\mu'-\Sigma)} \, ,
	\end{align}
	where the sum over $\Sigma$ goes from $-m_-$ to $m_-$ with step $2$, the numbers $\varkappa_\pm$ and $\rho_\pm$ are defined in \eqref{varkappa_pm} and \eqref{rho_pm}, the parameters $x$ and $\theta_i$ are defined in \eqref{theta12}, \eqref{theta34}, \eqref{theta56}, \eqref{theta78}, and the notation $(y)^\pm_k$ was introduced in \eqref{factorial}.
	
	By means of \eqref{W_n_m} the Boltzmann weight
	$W^{(n,m)}\left(\left.\begin{array}{cc}
		a & b \\
		b' & c
	\end{array}\right|u\right)$
	can be expressed in the form of terminating hypergeometric series
	\begin{equation} \nonumber
	{}_pF_q\left[\left.\begin{array}{ccc}
	\alpha_1 & \ldots & \alpha_p\\
	\beta_1 & \ldots & \beta_q
	\end{array}\right|z\right]\equiv\sum\limits_{k=0}^{\infty}\frac{(\alpha_1)_k\ldots (\alpha_p)_k}{(\beta_1)_k\ldots (\beta_q)_k}\frac{z^k}{k!},
	\end{equation}
	where $(y)_k = \prod_{j=0}^{k-1}(y+j) .$ Denote
	\begin{equation} \nonumber
		n_\pm=\frac{1}{2}[n\pm(b-a)], \quad m_\pm=\frac{1}{2}[m\pm(c-b)], \quad m^\prime_\pm=\frac{1}{2}[m\pm(b^\prime-a)] \, .
	\end{equation}
	The expression takes the form
	\begin{align}
		\nonumber &
		W^{(n,m)}\left(\left.\begin{array}{cc}
			a & b\\
			b^\prime & c
		\end{array}\right|u\right)
		\\
		\nonumber &
		= C^{(n,m)}(a,b,c,b^\prime|u) \; 
		{}_{9}F_{8}\left[\left.\begin{array}{cccccccccc}
			\alpha_1 & \alpha_2 & \alpha_3 & \alpha_4 & \alpha_5 & \alpha_6 & \alpha_7 & \alpha_8 & \alpha_9 \\
			\beta_1 & \beta_2 & \beta_3 & \beta_4 & \beta_5 & \beta_6 & \beta_7 & \beta_8 
		\end{array}\right|1\right] ,
	\end{align}
	where in the case $b + b' \leq a + c$:\\
	\begin{tabular}{|c|c|c|}\hline
		$j$ & $\alpha_j$ & $\beta_j$ \\\hline
		1 & $-m_-$ & $a+w+1$ \\\hline
		2 & $-n_+$ & $u-m + n_- + 1$ \\\hline
		3 & $-m_+'$ & $c + n_- - m_+ + 1 + w$ \\\hline
		4 & $a - m_- + w$ & $1 - \nicefrac{1}{2}(b'+b-a-c)$ \\\hline
		5 & $-u + c - n_+ + m_- + w$ & $\nicefrac{1}{2}(b'-b+a+c)+w+1$ \\\hline
		6 & $\nicefrac{1}{2}(a - m_- + w+2)$ & $\nicefrac{1}{2}(a - m_- + w)$ \\\hline
		7 & $a - m'_- + w$ & $-u-n_+$ \\\hline
		8 & $n_- + 1$ & $c- m_+ - n_+ + w$ \\\hline
		9 & $u+c - m_+ + n_- + w+1$ & \\\hline
	\end{tabular}
	\begin{align}
		\nonumber & C^{(n,m)}(a,b,c,b^\prime|u)
		= (b'+w) \, {m_+ \choose m'_+} \,
		\frac{\Gamma(a-m'_-+w)}{\Gamma(a+w+1)}
			\left(n_-+\nicefrac{1}{2}(b+b'-a-c)+1\right)_{-\nicefrac{1}{2}(b+b'-a-c)}
		\\
		\nonumber & 
		\times
		\frac{\Gamma\left(b+n_-+1+w\right)}{\Gamma\left(c+n_--m_++1+w\right)}
		\frac{\Gamma(a-m_-+w+1)}{\Gamma\left(\nicefrac{1}{2}(b'-b+a+c)+w+1\right)}
		\frac{\Gamma(u+c-m_++n_-+w+1)}{\Gamma(u+b+n_--m'_-+w+1)}
		\\
		\nonumber &
		\times
		\frac{\Gamma\left(u+n_++1\right)}{\Gamma(u+n_+-m'_++1)}
		\frac{\Gamma\left(\nicefrac{1}{2}(b+b'+c-a)-n_++w\right)}{\Gamma(c-m_+-n_++w)}\frac{\Gamma(u-n_--m_++1)}{\Gamma(u-m+n_-+1)}
	 \, ,
	\end{align}
	and in the case $b + b' \geq a + c$:\\
	\begin{tabular}{|c|c|c|}\hline
		$j$ & $\alpha_j$ & $\beta_j$ \\\hline
		1 & $-m'_-$ & $\nicefrac{1}{2}(b+b'+a-c)+w+1$ \\\hline
		2 & $-n_+ + \nicefrac{1}{2}(b+b'-a-c)$ & $u-m_+ -m'_- + n_- + 1$ \\\hline
		3 & $-m_+$ & $b + n_- - m'_- + 1 + w$ \\\hline
		4 & $a - m'_- + w$ & $1$ \\\hline
		5 & $-u + b - n_+ + m'_+ + w$ & $b'+w+1$ \\\hline
		6 & $\nicefrac{1}{2}(b' - m_+ + w+2)$ & $\nicefrac{1}{2}(b' - m_+ + w)$ \\\hline
		7 & $b' - m_+ + w$ & $-u-n_+ + \nicefrac{1}{2}(b+b'-a-c)$ \\\hline
		8 & $n_- + 1 + \nicefrac{1}{2}(b+b'-a-c)$ & $b- m'_- - n_+ + w$ \\\hline
		9 & $u+b - m'_- + n_- + w+1$ & \\\hline
	\end{tabular}
	\begin{align}
		\nonumber & C^{(n,m)}(a,b,c,b^\prime|u)
		= {m_- \choose m'_-} \,
		\frac{\Gamma(a - m'_- + w)}{\Gamma\left(\nicefrac{1}{2}(b+b'+a-c)+w+1\right)}
		\left(-\nicefrac{1}{2}(n-\nu)\right)_{\nicefrac{1}{2}(b+b'-a-c)} \\
		\nonumber &
		\times
		\frac{\Gamma(-u+b - n_+ + m'_+ + w)}{\Gamma(-u+c - n_+ + m_- + w)}
		\frac{\Gamma(u- n_- + m_+ + 1)}{\Gamma(u- m_+ - m'_- + n_- + 1)}
		\frac{\Gamma(b + n_- + 1 + w)}{\Gamma(b + n_- - m'_- + 1 + w)}
		\\
		\nonumber &
		\times
		\frac{\Gamma(b' - m_+ + w + 1)}{\Gamma(b'+w)}
		\frac{\Gamma\left(u + n_+ - \nicefrac{1}{2}(b+b'-a-c) + 1\right)}{\Gamma(u + n_+ - m'_+ + 1)}
		\frac{\Gamma\left(\nicefrac{1}{2}(b+b'-a+c) - n_+ + w\right)}{\Gamma(b - m'_- - n_+ + w)} \, .
	\end{align}
	In accordance with propositions \ref{YBE_SOS_from_vertex_prop} and \ref{vectors_linear_independence_prop}, $w \notin \mathbb{Z}$.

	\section{Connection between the $7$-vertex and the $11$-vertex models} \label{11V_section}
	
	Consider an arbitrary family of $R$-operators $R^{(n,m)}$ satisfying the Yang-Baxter equation \eqref{YBE_vert}. Consider the following similarity transformation:
	\begin{equation} \label{conj}
		\widetilde{R}^{(n,m)} = [A^{(n)}(u) \otimes A^{(m)}(v)] \, R^{(n,m)}(u-v) \, [A^{(n)}(u)^{-1} \otimes A^{(m)}(v)^{-1}] \, ,
	\end{equation}
	where $A^{(n)}(u)$ and $A^{(m)}(v)$ are invertible operators. If the obtained operator $\widetilde{R}^{(n,m)}$ depends only on the difference $u-v$,
	then any triple of such operators also satisfy the Yang-Baxter equation \cite{zabrodin}:
	\begin{equation} \nonumber
		\widetilde{R}^{(k,n)}_{12}(v) \, \widetilde{R}^{(k,l)}_{13}(u) \, \widetilde{R}^{(n,l)}_{23}(u-v) 
		= \widetilde{R}^{(n,l)}_{23}(u-v) \, \widetilde{R}^{(k,l)}_{13}(u) \, \widetilde{R}^{(k,n)}_{12}(v) \, .
	\end{equation}
	
	Let us construct the similarity transformation \eqref{conj} for $R$-operators obtained from the seven-vertex model by means of fusion procedure. Consider the transform \eqref{conj} for $R$-matrix \eqref{R_7V_rat} with $A^{(1)}(u)$ of the following form:
	\begin{equation} \nonumber
		A^{(1)}(u) = 
		\begin{pmatrix}
			1 & 0\\
			-\alpha u & 1
		\end{pmatrix} .
	\end{equation}
	As a result we find
	\begin{equation} \nonumber
		\widetilde{R}^{(1,1)}(u-v) =
		\begin{pmatrix}
			u-v+1 & 0 & 0 & 0\\
			\alpha(u-v) & (u-v) & 1 & 0\\
			-\alpha(u-v) & 1 & u-v & 0\\
			\alpha^2(u-v) & \alpha(u-v) & -\alpha(u-v) & u-v+1
		\end{pmatrix} .
	\end{equation}
	This is the $R$-matrix of $11$-vertex model \cite{AZ_23,DKK_03,KS_97}.
	
	The operator $R^{(n,m)}$ \eqref{R_k_n}, obtained from the operator $R^{(1,1)}$ of the seven-vertex model with the help of fusion procedure acts in $S^n\mathbb{C}^2 \otimes S^m\mathbb{C}^2$. The operator $A^{(n)}(u)$ is constructed from $A^{(1)}(u)$ in the following way:
	\begin{equation} \nonumber
		A^{(n)}(u) = \left. [A^{(1)}(u)]^{\otimes n} \right|_{S^n\mathbb{C}^2} \, .
	\end{equation}
	Using \eqref{R_k_1}, \eqref{R_k_n} and the identities
	\begin{equation} \nonumber
		A^{(1)}(u) \, A^{(1)}(v) = A^{(1)}(v) \, A^{(1)}(u) = A^{(1)}(u + v), \qquad [A^{(1)}(u)]^{\otimes n} \, \Pi_{1\ldots n} = \Pi_{1\ldots n} \, [A^{(1)}(u)]^{\otimes n} \, 
	\end{equation}
	one can prove that $\widetilde{R}^{(n,m)}$ indeed depends only on the difference $u-v$.

	In that way, we automatically obtain the fusion procedure for the $11$-vertex model from the fusion procedure for the $7$-vertex model: we have found the family of operators $\widetilde{R}^{(n,m)}(u)$ satisfying the Yang-Baxter equation \eqref{YBE_vert}, and the simplest representative of this family is the $R$-operator of the $11$-vertex model $\widetilde{R}^{(1,1)}(u)$.
	
	In the realization on the space of polynomials of one variable $z$ (see section~\ref{polynomial_sect}) the operator $A^{(n)}(u)$ shifts the argument of a function by $\alpha u$:
	\begin{equation} \label{An}
		A^{(n)}(u) = e^{\alpha u \partial_z} \, .
	\end{equation}
	
	Taking into account \eqref{conj}, the vertex-SOS correspondence for $\widetilde{R}^{(n,m)}$ assumes the form
	\begin{equation} \nonumber
		\widetilde{R}^{(n,m)}(u-v) \, \Psi^{(n)}{}^a_b\otimes\Psi^{(m)}{}^b_c = \sum\limits_{b^\prime} \Psi^{(n)}{}^{b^\prime}_c\otimes\Psi^{(m)}{}^a_{b^\prime} \, W^{(n,m)}\left(\left.\begin{array}{cc}
			a & b\\
			b^\prime & c
		\end{array}\right|u-v\right),
	\end{equation}
	where $W^{(n,m)}$ are the same SOS Boltzmann weights that were for the fused $7$-vertex SOS model, and the intertwining vectors
	\begin{equation} \nonumber
		\Psi^{(n)}{}^a_b = A^{(n)}(u) \, \psi^{(n)}(u)^a_b
	\end{equation}
	do not depend on the spectral parameter and in the realization on polynomials of one variable have the form
	\begin{equation} \label{intertwining_vector_11V}
		\Psi^{(n)}{}^a_b(z) = (-1)^n\prod_{p=1}^{n_+}[z-\alpha(n-a-2p+1-t)]
		\prod_{q=1}^{n_-}[z-\alpha(n+a-2q+1+s)] ,
	\end{equation}
	where $n_\pm = \nicefrac{1}{2}(n \pm (b-a))$. The formula \eqref{intertwining_vector_11V} follows from \eqref{intertwining_vector_polynomial_z} and \eqref{An}.

	\section{Conclusion}
	
	We have calculated the Boltzmann weights $W^{(n,m)}$ of SOS models in the form of terminating hypergeometric series ${}_{9}F_8$, and in a particular case -- in terms of rational fractions. The realization of the $R$-operator and the intertwining vectors on the space of polynomials has notably simplified the computations. Statistical weights of SOS models depend on additional real-valued free parameter $w \notin \mathbb{Z}$ and do not depend on the parameter $\alpha$ of the seven-vertex model \eqref{R_7V_rat}. For $n = m = 1$ the connection with the SOS model from the paper \cite{zapiski} was found.
	
	The obtained intertwining vectors can be used for derivation of the explicit expression connecting the partition functions of vertex and SOS models, as well as for construction of vertex model transfer-matrix eigenvectors~\cite{vega}.
	
	In addition, with the help of the similarity transformation from the $R$-operators of the fused $7$-vertex model we constructed the new family of Yang-Baxter equation solutions, the simplest representative of this family is the $R$-matrix of the $11$-vertex model \cite{AZ_23,DKK_03,KS_97}. The fusion procedure for the $11$-vertex model was automatically obtained from the fusion procedure for the $7$-vertex model. Using the same similarity transformation we obtained the vertex-SOS correspondence for the new family of $R$-matrices, including the $11$-vertex model, from the vertex-SOS correspondence for the $7$-vertex model. In so doing, the SOS Boltzmann weights $W^{(n,m)}$ turn out to be the same as for the $7$-vertex model, and the intertwining vectors become independent of the spectral parameter, which can noticeably simplify the use of the vertex-SOS transform.
	
	The next step in the study of the obtained SOS models can be the computation of correlation functions \cite{bogolubov,konno}.

\end{document}